\newcolumntype{d}[1]{D{.}{.}{#1}}
\newcolumntype{t}[1]{D{,}{,}{#1}}
\newcolumntype{i}[1]{D{.}{}{#1}}
\newtheorem{theorem}{Theorem}[section]
\newtheorem{algorithm}{Algorithm}[section]
\newtheorem{assumption}{Assumption}[section]
\newtheorem{corollary}{Corollary}[section]
\newtheorem{definition}{Definition}[section]
\newtheorem{example}{Example}
\newtheorem{lemma}{Lemma}[section]
\newtheorem{proposition}{Proposition}[section]
\newtheorem{remark}{Remark}[section]
\theoremstyle{plain}
\newenvironment{myprompt}[1]{\innercustomprompt}{\endinnercustomprompt}
\numberwithin{equation}{section}
\begin{document}
\title{How Well Do LLMs Predict Human Behavior? \\ A Measure of their Pretrained Knowledge\footnote{For helpful discussions, we thank Tom Cunningham, Matt Gentzkow, Guido Imbens, and Pat Kline.}}
\author{Wayne Gao\footnote{Department of Economics, University of Pennsylvania} \quad \quad Sukjin Han\footnote{School of Economics, University of Bristol} \quad \quad Annie Liang\footnote{Department of Economics, Northwestern University} }
\maketitle

\begin{abstract}

Large language models (LLMs) are increasingly used to predict human behavior. We propose a measure for evaluating how much knowledge a pretrained LLM brings to such a prediction: its \emph{equivalent sample size}, defined as the amount of task-specific data needed to match the predictive accuracy of the LLM. We estimate this measure by comparing the prediction error of a fixed LLM in a given domain to that of flexible machine learning models trained on increasing samples of domain-specific data. We further provide a statistical inference procedure by developing a new asymptotic theory for cross-validated prediction error. Finally, we apply this method to the Panel Study of Income Dynamics. We find that LLMs encode considerable predictive information for some economic variables but much less for others, suggesting that their value as substitutes for domain-specific data differs markedly across settings.
\end{abstract}

\section{Introduction}

Large language models (LLMs) are increasingly used in  economics  as predictive tools---both to generate synthetic responses in place of human subjects \citep{horton2023large,AnthisEtAl2025LLMSocialSimulations}, and to forecast economic outcomes directly \citep{HewittEtAl2024PredictingSocialScience,fariaeCastroLeibovici2024AIInflation,chan-lauEtAl2025ForecastingFedFundsAI}. Their appeal in these roles is obvious: A pretrained LLM embeds a vast amount of information and can be deployed at negligible cost, often in settings where collecting new, domain-specific human data would be expensive or infeasible.

What remains unclear is how to assess the \emph{quality} of these predictions. This paper proposes a measure that quantifies the domain-specific value of LLMs in an interpretable unit: the amount of human data they substitute for. Specifically, we ask how much human data would be required for a conventional model trained on that data to match the predictive performance of the pretrained LLM in that domain. We define the LLM's \emph{equivalent sample size} to be the smallest size of training data required for the model to match the LLM. 

For some prediction tasks, LLMs may perform comparably to models trained on large datasets, making them a useful surrogate for data collection. For others, LLMs may be so poorly suited that even models trained on a small amount of domain-specific data quickly outperform them. Quantifying the LLM's equivalent sample size can thus inform the need for data collection and the interpretation of LLM-based empirical results across domains.

Beyond introducing this measure, the paper develops a method for estimating the equivalent sample size and for conducting statistical inference. In this part of the paper, we construct a confidence interval for our proposed measure via sequential hypothesis testing, whose validity is shown by developing a novel asymptotic theory for cross-validated estimators. Finally, we illustrate the measure by evaluating LLM predictions of variables in the Panel Study of Income Dynamics, where we find substantial heterogeneity in the extent to which LLMs can proxy for observed data across prediction tasks.

The paper proceeds as follows. Section \ref{sec:Framework} formalizes the setting and introduces our proposed measure. We consider general prediction problems in which an analyst seeks to predict an outcome $Y$ given covariates $X$. A \emph{prediction rule} is a mapping from covariates to outcomes, and the error of any such rule is evaluated by its expected loss under a specified loss function. The standard approach for learning a prediction rule is to collect a dataset of $(X,Y)$ observations and train a model or machine learning algorithm on that data.

Large language models (LLMs) offer an alternative. Rather than training on domain-specific data, the analyst can describe the prediction problem to a pretrained LLM and request predictions of $Y$ given $X$. While the LLM is itself trained on data, we focus on settings in which the analyst deploys a fixed, pretrained model without further domain-specific fine-tuning, as is common in the literature (see Section \ref{sec:RelatedLit}). We refer to the resulting mapping from covariates to predicted outcomes as the \emph{LLM prediction rule}.

We then compare the error of the LLM prediction rule to that of a fixed comparator trained on datasets of increasing size. This comparator may be an off-the-shelf machine-learning method or a structural economic model; for brevity, we refer to it simply as an algorithm. Training the algorithm on datasets of different sizes yields a sequence of prediction rules that trace out an error curve. We define the \emph{equivalent sample size} to be the smallest training sample size at which the algorithm's error weakly improves upon that of the LLM prediction rule. This quantity can be interpreted as the amount of data required to train a model that is as informative as the pretrained LLM for the prediction task at hand.

Section \ref{sec:Procedure} develops an estimator and inference procedure for the equivalent sample size. We estimate the LLM's prediction error using its performance on the full sample, and estimate the error curve of the comparator algorithm via block-out cross-validation. We finally introduce a sequential testing procedure that compares the LLM and algorithm across an ordered sequence of training sizes and yields a one-sided confidence interval for the equivalent sample size.

Section \ref{sec:InferenceTheory} provides theoretical foundations for this inference approach and establishes its validity under the mild assumption that the algorithm's error is weakly reducing in the size of its training data. Because the sequence of hypotheses comparing the algorithm and the LLM is nested, the sequential procedure controls coverage without requiring multiple-testing corrections, thus avoiding the conservativeness that typically arises in multiple hypothesis testing. This feature is especially important in our setting, where we want to compare algorithm performance across a large range of training sample sizes.  We finally derive a central limit theorem for the block-out cross-validated risk estimator under a fixed-training-size asymptotic regime, and construct a consistent estimator of its asymptotic variance. Together, these results establish that the proposed procedure delivers a valid one-sided confidence interval for the equivalent sample size with correct asymptotic coverage.

Section \ref{sec:Application} applies our proposed measure to evaluate LLM prediction of variables in the Panel Study of Income Dynamics (PSID). The PSID is one of the longest-running longitudinal household surveys in the U.S., and is frequently used in empirical research. Using the 2021 wave of the Panel Study of Income Dynamics, we study a suite of prediction tasks in which respondents' demographic, labor-market, and household covariates are used to forecast hourly wages, homeownership, drinking, and smoking. We construct LLM prediction rules by translating each covariate vector into a natural-language persona description and querying the model. To limit concerns about data leakage, we emphasize post-cutoff evaluation by using static open-source models with documented training cutoffs. The estimated equivalent sample size varies substantially across outcomes---e.g., approximately 20 observations for predicting hourly wages to 600 observations for predicting homeownership---highlighting substantial heterogeneity in the extent to which pretrained knowledge substitutes for domain-specific data.

Finally, Section \ref{sec:CATE} extends our evaluation framework to assessing LLM performance when prompted to infer counterfactual outcomes or treatment effects. Under an assumption of unconfoundedness, we define equivalent sample sizes for both the average potential outcome conditional on covariates and the conditional average treatment effect (CATE). Although the true CATE is unknown, a prediction framework with a transformed outcome enables us to calculate the risk for CATE for both the LLM and ML and thus the equivalent sample size.

\subsection{Related Literature} \label{sec:RelatedLit}

\paragraph{LLMs as Human Surrogates.}

LLMs are increasingly used as \emph{surrogate} experimental and survey subjects in place of human participants. Researchers prompt LLMs with experimental or survey scenarios and treat the resulting responses as proxies for human behavior. This approach has been applied in survey settings \citep{argyle2023out,jansen2023employing,sun2024random} and in experimental settings \citep{aher2023using,hewitt2024predicting,park2024generative,BohrenHullImas2025}. Related work explores broader applications of LLMs as simulated economic or social agents, including their use in market simulations and policy analysis \citep{horton2023large,manning2024automated}.

Despite this growing adoption, there is little consensus on how to assess the quality or value of LLMs as human surrogates. Much of the existing literature evaluates performance in task-specific or descriptive terms---asking whether LLM outputs resemble human responses in a given domain---without a unifying metric that quantifies what is gained or lost by substituting LLMs for human data. At the same time, several papers caution against uncritical reliance on LLMs, documenting systematic biases, domain failures, and sensitivity to prompting choices \citep{gao2025take,ludwig2025large}. This paper contributes a general measure for evaluating the appropriateness of LLM predictions: the amount of human data the LLM effectively substitutes for.

\paragraph{Model Evaluation.}

Our paper is closely related to \citet{HuangWuWang2025LLMSurveyWorth}, who also interpret the value of LLM outputs in terms of an equivalent number of human observations. In their setting, the effective sample size arises from an uncertainty-quantification problem: how many synthetic LLM responses are required to construct confidence intervals with valid coverage for human population parameters. In contrast, we treat the LLM as a fixed prediction rule and define the equivalent sample size as the amount of real, domain-specific data required for a supervised learning algorithm to match the LLM's predictive performance.

Additionally, our approach is in the spirit of \citet{ErevRothSlonimBarron2007}'s notion of the \emph{equivalent number of observations (ENO)}, which evaluates game-theoretic models by asking how many subject observations would be required for the sample average of observed behavior to predict as well as the theory. The objects of comparison are however different: In their setting, the object being evaluated---a behavioral or equilibrium model---is itself estimated from data, whereas we treat the LLM as a fixed prediction rule that is not trained or tuned on the domain sample. Additionally, their comparator is the sample average of observed behavior, while ours is the prediction of a supervised learning algorithm. Thus our measure applies to a substantially broader set of prediction tasks.

Finally, our work is related more broadly to work on evaluating economic models \citep{FudenbergKleinbergLiangMullainathan2022,FudenbergGaoLiangForthcoming,AndrewsFudenbergLeiLiangWu2025}, where here the ``model'' is the pretrained LLM. In particular, our benchmarking of LLM performance against that of a comparator algorithm is similar to \citet{FudenbergKleinbergLiangMullainathan2022}, although our focus on the equivalent \emph{sample size} of the comparator algorithm is entirely new. 

\paragraph{Asymptotic Theory with Cross-Validation.}

A growing literature develops asymptotic theory for cross-validated prediction error to address the dependence induced by overlapping training samples across folds \citep{austern2020asymptotics,bates2024cross,fava2025training,lei2025modern}. Our results in Section 3.4 for block-out cross validation contribute to this literature, and are in particular related to \cite{lei2025modern} and \cite{fava2025training}. The key difference in our analysis is that we consider an asymptotic regime  where the training sample size is held fixed. \cite{lei2025modern} and \cite{fava2025training} instead work with a standard asymptotic regime where the training sample size grows large, which is natural for typical machine learning applications. Inference for our proposed measure demands a different  asymptotic regime, since we implement hypothesis testing with respect to a given number of training observations.

\section{Framework} \label{sec:Framework}

Section \ref{sec:Setup} introduces our framework: we study prediction problems in which an analyst observes a vector of covariates $X$ and seeks to predict an outcome $Y$. Section \ref{sec:LLMPredict} defines a prediction rule obtained by providing the LLM with a text description of $X$ and eliciting a prediction of $Y$. This approach contrasts with the standard machine learning paradigm, reviewed in Section \ref{sec:StandardML}, in which the analyst first collects a dataset of $(x,y)$ observations and then trains a task-specific prediction algorithm on that data. Finally, Section \ref{sec:ProposedMeasure} introduces our proposed measure: the smallest training sample size for which a domain-specific model or algorithm achieves predictive performance comparable to that of the LLM.

\subsection{Setup} \label{sec:Setup}
 A \emph{prediction problem} is a pair $(X,Y)$ where $X \in \mathcal{X}$ is a covariate vector and $Y \in \mathcal{Y}$ is an outcome of interest. 

 \begin{example} The covariate vector $X$ describes a worker's educational attainment, years of experience, occupation, industry, geographic location, and past earnings. The outcome $Y \in \mathbb{R}_+$ is the worker's wage in the following year. \end{example}

 \begin{example} The covariate vector $X$ describes an individual's age, educational attainment, marital status, and race. The outcome $Y\in [0,1]$ is the probability with which the individual owns a house.
 \end{example}

A \emph{prediction rule} is any map $f: \mathcal{X} \rightarrow \mathcal{Y}$ that predicts the outcome $Y$ given $X$, where the LLM-based prediction rule is one special choice of $f$ (see Section \ref{sec:LLMPredict}).  As is standard, we measure the accuracy of a prediction $\hat{y}$ given true outcome $y$ using a loss function $\ell: \mathcal{Y} \times \mathcal{Y} \rightarrow \mathbb{R}$, where $\ell(y,y')$ is the inaccuracy of predicting $y'$ when the true outcome is $y$ (e.g., $\ell(y,y')=(y'-y)^{2}$ for regression or $\ell(y,y')=1\{y'\neq y\}$
for classification). Define the  \emph{error} of prediction rule $f$ to be
\[e(f)\equiv E_P[\ell(f(X),Y)] \]
i.e., the expected loss when predictions are made using $f$, where $P$ denotes the true (unknown) joint distribution of $(X,Y)$. This is also known as the \emph{risk} of $f$.

\subsection{LLM-Based Prediction} \label{sec:LLMPredict}

Our procedure and its guarantees generalize to any fixed prediction rule $f$, but we are in particular interested in the prediction rule $f_{\mathrm{LLM}}$ that corresponds to asking a pretrained LLM to predict the outcome. The covariate vector $X$ and desired outcome $Y$ are described in natural language and given to the LLM, and the LLM's response is taken as the predicted outcome.

Here is an example prompt:
 \footnotesize 
\begin{myprompt}{1}\label{P_ex}\texttt{Here is your persona: You are a 37-year-old male living in Tennessee in 2020. You work in architecture and engineering occupations in manufacturing industry. You have 4 years of work experience in the current job. You have\\ completed 14 years of education. You are married and have 1 child. Your father has 12 years of education. Your mother has 12 years of education. Your race is white. Your health is excellent. You are a Protestant.
\\
\\
Given your persona, do you own a house or apartment or do you rent? If you own, say 1; if you rent, say 5; if you neither own nor rent, say 8.}\end{myprompt}

\normalsize
\noindent The prompt text is fixed up to placeholders for individual covariate values, so the structure of the prompt is constant and varies only in the realized covariate values. Let
\[e_{\mathrm{LLM}} \equiv e(f_{\mathrm{LLM}})\]
denote the error of this prediction rule.

Throughout we treat the LLM-based  rule $f_{\mathrm{LLM}}$ as fixed, and do not model the background data on which it is trained. In practice, researchers often do not retrain foundation models, but instead deploy a given pretrained system. Our object of interest is therefore the predictive performance of this fixed system.

\subsection{Domain-Specific Learning} \label{sec:StandardML}

We compare the above approach to prediction rules that are obtained by training a model or algorithm on domain-specific data. This approach first specifies a set of prediction rules $\mathcal{F}$, and then uses data to select among them. Formally, an \emph{algorithm} $a$ is a map
\[a: \mathcal{D} \rightarrow \mathcal{F}\]
from the set of all finite datasets $\mathcal{D} \equiv \cup_{N \geq 1} (\mathcal{X} \times \mathcal{Y})^N$ to the set of prediction rules $\mathcal{F}$.\footnote{All of our procedures and results extend if we instead model the algorithm as a map $a: \mathcal{D} \times [0,1] \rightarrow \mathcal{F}$, where the  additional argument in $[0,1]$ represents an exogenous source of randomness, allowing the algorithm's output to be stochastic---for example, due to random initialization, subsampling, or internal randomization.}

A typical choice of $a$ would be an economic model or a machine-learning procedure. In the first case, $\mathcal{F}$ represents an interpretable class of economically motivated prediction rules, and $a(D)$ denotes the rule obtained by estimating the model's parameters on the dataset $D$ using a specified estimation method. In the second case, $\mathcal{F}$ is a potentially rich class of predictors---such as ensembles of decision trees---and $a(D)$ is the fitted predictor produced by training the algorithm on $D$, including any regularization or tuning steps. In both cases, the algorithm maps data into a concrete prediction rule whose out-of-sample performance typically improves with the size of the training dataset, providing a natural benchmark against which to evaluate the predictive value of a fixed LLM.

For any sample size $N \in \mathbb{N}_+$,  let
\[f_N \equiv a(D_N)\]
be the prediction rule obtained by training the algorithm on a random sample $D_N \equiv \{(x_i,y_i)\}_{i=1}^N$ of $N$ observations $(x_i,y_i)$ drawn i.i.d. from $P$. Since $D_N$ is random, the prediction rule $f_N$ is as well. We define 
\[e_N^a \equiv E[e(f_N)] = E_{D_N}\left[E_{(x,y)}[ \ell(a(D_N),(x,y))]\right]\]
to be the expected prediction error of the algorithm trained on $N$ datapoints. That is, $e^a_N$ integrates both over the randomness of the sample $D_N$ used to train $f_N$ and the randomness of the new observation $(X,Y)\sim P$ on which $f_N$ is evaluated.

\subsection{Proposed Measure} \label{sec:ProposedMeasure}

Our proposed measure of the LLM's \emph{equivalent sample size} is the minimum amount of domain-specific data an experimenter would need to collect so that the algorithm $a$, when trained on that data, produces predictions at least as accurate as the LLM's.

\begin{definition} Fix a prediction problem $(X,Y)$. The $a$-\emph{equivalent sample size} ($a$-ESS) of $f_{\mathrm{LLM}}$ is 
\[
N^*\equiv N^{*}_a=\min\left\{ N\in \mathbb{N}_+:e^a_N\le e_{\mathrm{LLM}} \right\} .
\]
i.e., the smallest  training sample
size $N^{*}$ such that the predictive performance of $f_{N^{*}}$
weakly improves upon that of the LLM. If no such $N$ exists, we set $N^* = \infty$.
\end{definition}

The equivalent sample size depends on the benchmark algorithm $a$, as well as on the choice of LLM, its prompting protocol, and the prediction problem. Thus it should be interpreted as measuring the LLM's predictive ability in a specific prediction problem and relative to the specified algorithm, rather than as an absolute measure of the LLM's predictive content. We encourage the use of algorithms $a$ that are well-suited to the prediction problem and thus serve as more demanding benchmarks.

At one extreme, $N^*=1$ means that the LLM provides no predictive advantage relative to the algorithm $a$ in this prediction problem, even when the algorithm is trained on a single data point. This case corresponds (for example) to an LLM that is very poorly suited to the prediction task, or which relies on spurious correlations in its original training data. 

At the other extreme, the equivalent sample size can also be infinite, meaning that no amount of domain-specific data allows the benchmark algorithm to match the LLM's predictive performance. This could occur if $a$ is too restrictive to approximate the LLM's mapping even asymptotically---for example, if $a$ is an economic model that imposes strong structural restrictions.

In general, we expect the LLM's equivalent sample size to be finite. That is, while the LLM's pretraining  enables it to make useful predictions even in the absence of any new data, a sufficiently flexible algorithm trained on data from the relevant prediction task eventually outperforms it. 

In our application we will report figures such as Figure \ref{fig:ESS}, which plots the constant error $e_{\mathrm{LLM}}$ against $e_N^a$ as $N$ varies.  The intersection point defines the $a$-equivalent sample size $N^{*}$. 

\begin{figure}
\noindent \begin{centering}
\includegraphics[scale=0.6]{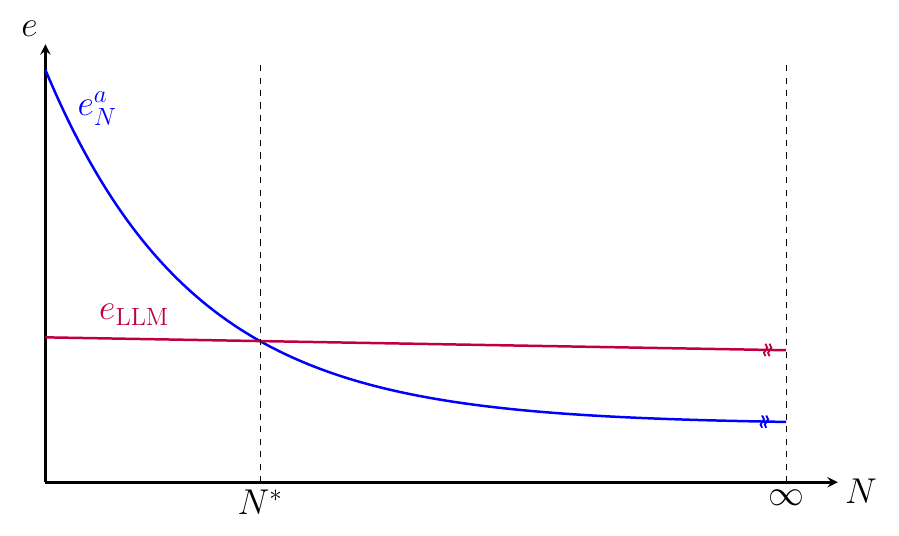}
\par\end{centering}
\caption{$a$-Equivalent Sample Size ($N^{*}$)}
\label{fig:ESS}
\end{figure}

\section{Estimation and Inference: Procedure} \label{sec:Procedure}

This section describes our procedure for estimating and conducting inference for the $a$-equivalent sample size $N^{*}$. Recall that $N^{*}$ is defined as the smallest training sample size $N$ given which  the machine learning algorithm weakly outperforms the LLM-based prediction rule:
\[
N^{*} \;\equiv\; \min\left\{N : e_N^{a} \le e_{\mathrm{LLM}}\right\}.
\]
Section \ref{sec:EstimateLLM} describes our approach to estimating the error of the LLM-based prediction rule, $e_{\mathrm{LLM}}$. Section \ref{sec:EstimateML} describes a block-out cross-validation procedure for estimating the error curve of the machine learning algorithm, $\{e_N^{a}\}_{N \ge 1}$. These can be combined to compute a plugin estimator for $N^*$, as we describe in Section \ref{sec:Plugin}. Finally, Section \ref{sec:CI} develops a sequential hypothesis-testing procedure that delivers a one-sided confidence interval (CI) for $N^*$. To aid readability, this section focuses on the procedure itself, while Section \ref{sec:InferenceTheory} provides theoretical guarantees for these procedures.

\subsection{Estimating the LLM Benchmark} \label{sec:EstimateLLM}

We estimate the error of the LLM-based prediction rule, $e_{\mathrm{LLM}}$, as its performance on the full data sample $\{(X_i,Y_i)\}_{i=1}^{n}$. For each observation $X_i$, obtain a prediction $\widehat{Y}_i^{\mathrm{LLM}} = f_{\mathrm{LLM}}(X_i)$ by querying the LLM using the fixed prompt template described in Section~\ref{sec:LLMPredict}. The empirical error of the LLM is then given by
\begin{equation} \label{eq:estimateLLM}\widehat{e}_{\mathrm{LLM}} \equiv \frac{1}{n}\sum_{i=1}^{n}\ell\!\left(\widehat{Y}_i^{\mathrm{LLM}},\,Y_i\right).
\end{equation}

\subsection{Estimating the Error Curve of the ML Algorithm} \label{sec:EstimateML}

We next estimate the algorithm's error curve by training it on disjoint training blocks of a given size and evaluating each fitted predictor on the held-out complement. Averaging the resulting out-of-sample losses produces an estimate of the algorithm's expected error as a function of the size of the training sample. Throughout this section, we drop dependence of $e_N^a$ on the algorithm $a$ to ease notation.

\begin{figure}[h]
\begin{center}
    \includegraphics[scale=0.35]{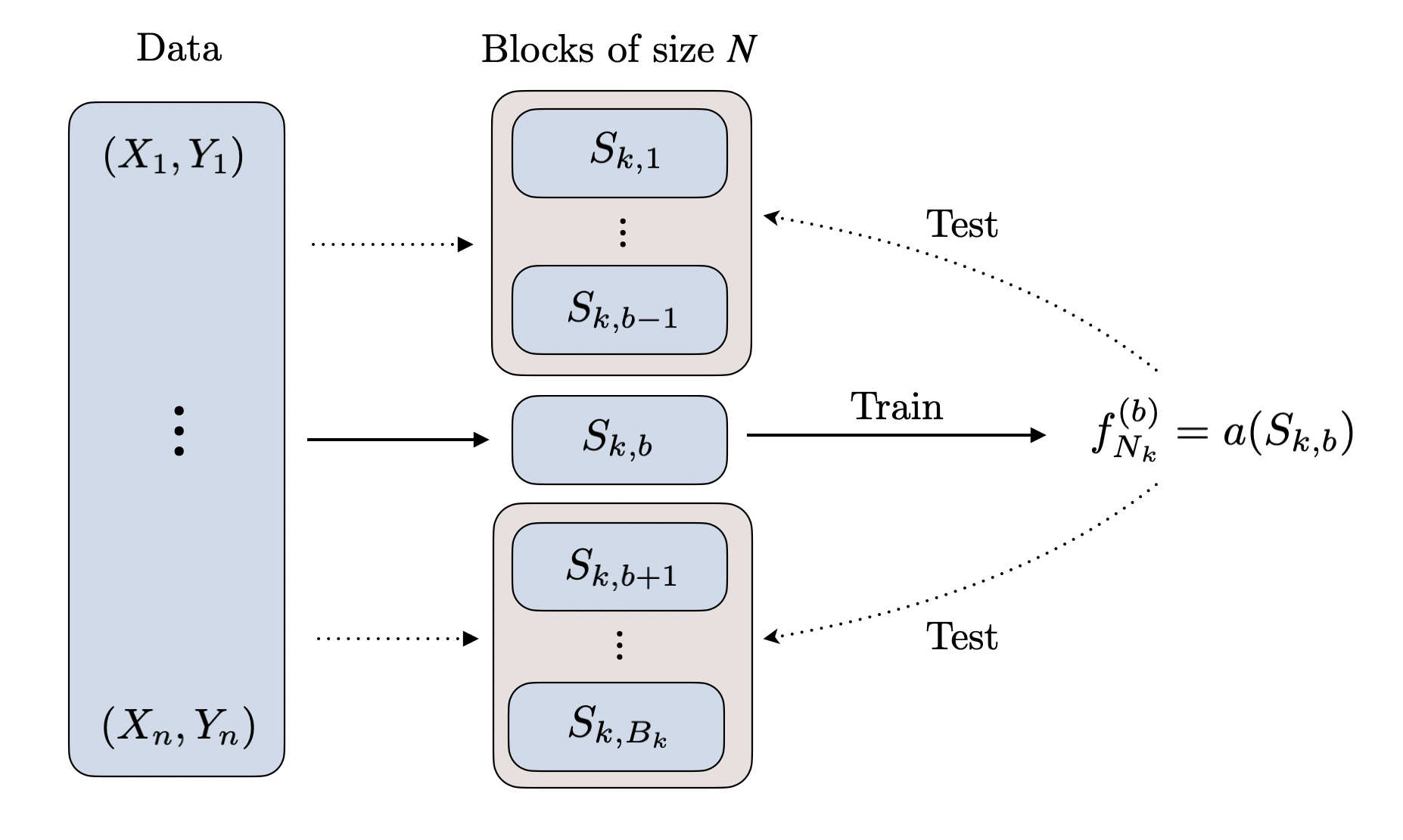}
    \caption{\small{Depiction of block-out cross-validation: The data is split into $B_k$ blocks each of size $N_k$. In each round, one block is used for training and the remaining are used for testing.}} \label{fig:blockoutCV}
    \end{center}
\end{figure}

Given a dataset of size $n$, fix a candidate training size $N_k$ for $k\in \{1,...,K\}$. Set $B_k \equiv \lfloor n / N_k \rfloor$ to be the number of blocks and $M_n \equiv n - N_k$ to be the size of the test set. For simplicity, assume $n = B_k N_k$; otherwise, discard the last $n - B_k N_k = O(1)$ observations.\footnote{Discarding these $B_kN_k$ observations does not affect the later asymptotic theory.}

Partition the index set $\{1, \ldots, n\}$ into $B_k$ disjoint blocks $S_{k,1}, \ldots, S_{k,B_k}$, each of size $N_k$. For concreteness, take $S_{k,b} = \{(b-1)N_k + 1, \ldots, b N_k\}$ for $b = 1, \ldots, B_k$. For each block $b \in \{1, \ldots, B_k\}$, define the \emph{block $b$ training set} to be \[\mathcal{D}^{(k,b)}_{\mathrm{train}} \equiv \{(X_i, Y_i) : i \in S_{k,b}\}\]
and the \emph{block $b$ test set} to be
\[\mathcal{D}^{(k,b)}_{\mathrm{test}} \equiv \{(X_i, Y_i) : i \notin S_{k,b}\}\]
as depicted in Figure \ref{fig:blockoutCV}. Because the observations $(X_i,Y_i)$ are i.i.d.\ and the sets $S_{k,b}$ form a partition of $\{1, \ldots, n\}$, the training blocks $\mathcal{D}^{(k,1)}_{\mathrm{train}}, \ldots, \mathcal{D}^{(k,B_k)}_{\mathrm{train}}$ are independent and each has distribution $P^{N_k}$.

Recall that each training set is of size $N_k$ while each test set is of size $M_n \equiv n - N_k$. In each round $b$, we fit the learning algorithm on the block $b$ training set to obtain the block-wise predictor
\[
f^{(b)}_{N_k} \equiv a\bigl(\mathcal{D}^{(k,b)}_{\mathrm{train}}\bigr),
\]
and evaluate its empirical error on the test complement:
\[
\widehat{e}_{k,b} \equiv \frac{1}{M_n} \sum_{i \notin S_{k,b}} \ell\bigl(f^{(b)}_{N_k}(X_i), Y_i\bigr).
\]
The block-out CV estimator of $e_{N_k}$ is finally
\begin{equation}
\label{eq:bCV-def}
\widehat{e}^{CV}_{N_k} \equiv \frac{1}{B_k} \sum_{b=1}^{B_k} \widehat{e}_{k,b}.
\end{equation}
i.e., the average test error across the choice of training block. 

\subsection{Plugin Estimator for the ESS} \label{sec:Plugin}

Given the estimates (\ref{eq:estimateLLM}) and (\ref{eq:bCV-def}), the plug-in estimator of the $a$-equivalent sample size is defined as
\[\widehat{N}^{*} \equiv \min\left\{N_k : \widehat{e}_{N_k}^{\mathrm{CV}} \le \widehat{e}_{\mathrm{LLM}},~k \in \{1,...,K\}\right\}
\]
i.e., the smallest sample size at which the estimated LLM error $\widehat{e}_{\mathrm{LLM}}$ exceeds the estimated error $\widehat{e}^{CV}_N$.

The next section defines a sequential estimator $\widehat N^{*}_{\alpha}$ that incorporates sampling uncertainty and delivers a lower confidence bound for $N^{*}$ with confidence level $1-\alpha$. Under standard regularity conditions, both estimators are asymptotically equivalent, but $\widehat N^{*}_{\alpha}$ is conservatively biased in finite samples to ensure correct coverage.

\subsection{One-Sided Confidence Interval for the ESS} \label{sec:CI}

To develop a one-sided CI for $N^*$, we compare the predictive performance of the machine learning algorithm and the LLM across an increasing sequence of training sample sizes, and identify the smallest training
size at which we can no longer conclude that the LLM outperforms the machine learning algorithm at the desired confidence level.

Let $\{N_{k}\}_{k=1}^{K}$ be an increasing sequence of
candidate training sizes, with $N_{1}<N_{2}<\cdots<N_{K}$. For each $k\in\{1,\dots,K\}$, consider the one-sided hypotheses
\[
H_{0,k}:e_{N_{k}}\le e_{\mathrm{LLM}},\qquad H_{1,k}:e_{N_{k}}>e_{\mathrm{LLM}}.
\]
Each hypothesis $H_{0,k}$ compares the predictive performance of the machine learning algorithm trained on $N_k$ 
 observations to that of the LLM, with the null asserting that the machine learning algorithm weakly outperforms the LLM at that training size. To test $H_{0,k}$, we use the test statistic
\begin{equation}\label{eq:t_stat}
T_{N_{k},n}\equiv\frac{\widehat{e}_{N_{k}}^{\mathrm{CV}}-\widehat{e}_{\mathrm{LLM}}}
{\widehat{\sigma}_{k}/\sqrt{n}},    
\end{equation}
where $\widehat\sigma_{k}^{2}$ is an estimator of the variance
of $\widehat e_{N_k}^{\mathrm{CV}} - \widehat e_{\mathrm{LLM}}$, as detailed in \eqref{eq:test-stat-corr} or \eqref{eq:t_stat_con} in Section \ref{subsec:Test_k} below.
We reject $H_{0,k}$
at level $\alpha$ when $T_{N_{k},n}>c_{\alpha}$ where $c_{\alpha}$ is the critical value.

We will be interested in the error curve $\{e_{N_k}\}_{k \geq 1}$. Theoretically it is possible that $e_N$ is not monotone in $N$, so that increasing the training size \emph{reduces} expected performance. This behavior is somewhat counterintuitive, and we expect that it is not descriptive of most economic models and machine learning algorithms that are considered in practice. Our subsequent results will impose the following assumption.

\begin{assumption}[Monotonicity] \label{assu:Monotone} 
$e_N$ is nonincreasing in $N$. 
\end{assumption}

Given a monotone error curve, we can conduct inference on $N^{*}$ using the following sequential testing procedure at overall level $\alpha$:
\begin{algorithm}\label{alg:seq_test_0} Proceed as follows:
\begin{list}{}{\leftmargin=1.5em \itemsep=0.5em}
\item[\textbf{Step 1.}]
     Test $H_{0,1}$ at level $\alpha$. If $H_{0,1}$ is rejected, proceed to Step 2. Otherwise, 
    stop and set $\widehat{N}_{\alpha}^{*} %=N_{1}$
    = 1$. \\ \hspace*{2.2em}\vdots \vspace*{-1em}
  \item[\textbf{Step $k$.}] Test $H_{0,k}$ at level $\alpha$. If $H_{0,k}$  is rejected, proceed to Step $k+1$. Otherwise,
    stop and set $\widehat{N}_{\alpha}^{*} %=N_{k}$. 
    =N_{k-1}+1$. \\ \hspace*{2.2em}\vdots \vspace*{-1em}
\item[\textbf{Step $K$.}] Test $H_{0,K}$ at level $\alpha$. 
    If  $H_{0,K}$  is rejected, conclude $N^{*}>N_{K}$ (and set $\widehat{N}_{\alpha}^{*} %
    =N_{K}+1$). Otherwise, set $\widehat{N}_{\alpha}^{*} %
    =N_{K-1}+1$.
    \end{list}
\end{algorithm}
The procedure sequentially increases the training sample size of the machine-learning algorithm and tests, at each step, whether its predictive performance weakly exceeds that of the LLM. As long as the null hypothesis is rejected, the procedure continues to larger training sizes. At the first step $k$ at which the corresponding null $H_{0,k}$ cannot be rejected, the procedure stops and sets the estimated equivalent sample size to $N_{k-1}+1$, the smallest sample size exceeding the largest training size for which rejection occurred.\footnote{We set $\widehat{N}_{\alpha}^{*}$ to $N_{k-1}+1$ rather than $N_k$ to account for the possible coarseness of the grid (i.e., $\{N_1,...,N_K\} \subsetneq \{1,...,K\}$). When one employ the finest grid (i.e, $N_k=k$), then simply $N_{k-1}+1 = N_k$.} If all null hypotheses up to $H_{0,K}$ are rejected, the procedure concludes that the equivalent sample size exceeds $N_K$.

 This stopping rule yields a valid lower confidence bound for the equivalent sample size $N^*$. In particular, suppose Assumption \ref{assu:Monotone} holds and the level $\alpha$ test in each step of Algorithm \ref{alg:seq_test_0} is valid. Then we show below that 
 \begin{equation}\label{eq:CI_def}
    \text{CI}_{N^*}\equiv \left[\widehat{N}_{\alpha}^{*},\infty\right)
 \end{equation}
is a one-sided $(1-\alpha)$-confidence interval for $N^{*}$. That is,
$ \Pr\!\left(\widehat{N}_{\alpha}^{*}\le N^{*}\right)\ge 1-\alpha$, i.e., with probability at least $1-\alpha$, one can conclude that the true equivalent sample size is no smaller than $\widehat{N}_\alpha^*$. In practice, this allows one to  say ``$N^{*}$ is at least as large as $\widehat{N}_{.05}^{*}$ with $95\%$ confidence,'' so that we have a minimal guarantee on the value of the LLM prediction rule.  The validity of the tests in Algorithm \ref{alg:seq_test_0} and the CI \eqref{eq:CI_def} are established in Section \ref{sec:InferenceTheory}.

\begin{remark}[Alternative CI Representation]
\label{rem:CI_duality}
The CIs for $N^*$ can be expressed in terms of CIs for the \emph{excess risk} of the ML algorithm relative to the LLM (i.e., $e^a_{N_k} - e_{\mathrm{LLM}}$). Recall that the algorithm stops at the first $N_k$ where the null hypothesis $H_{0,k}: e^a_{N_k} \le e_{\mathrm{LLM}}$ is \emph{not} rejected.
By the duality between hypothesis tests and CIs, non-rejection at level $\alpha$ corresponds to the value $0$ being contained in the one-sided CI for the difference $e^a_{N_k} - e_{\mathrm{LLM}}$.

Formally, let $LB_{k}$ denote the lower bound of the asymptotic one-sided $(1-\alpha)$ CI for the excess risk, constructed from the test statistic \eqref{eq:t_stat}.
\[
LB_{k} \equiv \left(\widehat{e}^{CV}_{N_k} - \widehat{e}_{\mathrm{LLM}}\right) - z_{1-\alpha} \frac{\widehat{\sigma}_{k}}{\sqrt{n}}.
\]
Assume the finest grid (i.e., $N_k=k$) for simplicity. The sequential estimator $\widehat{N}_{\alpha}^{*}$ is simply the smallest sample size where this lower bound is non-positive (indicating that we cannot rule out that the ML model has matched the LLM):
\begin{equation}\label{eq:CI_duality}
\widehat{N}_{\alpha}^{*} = \min \left\{ N_k : LB_{k} \le 0 \right\} = \min \left\{ N_k : \widehat{e}^{CV}_{N_k} - z_{1-\alpha} \widehat{\sigma}_{k}/\sqrt{n} \le  \widehat{e}_{\mathrm{LLM}}\right\}.\footnote{For a generic coarser grid, $\widehat{N}_{\alpha}^{*}=\max\left\{N_k : LB_k>0\right\} + 1$.}    
\end{equation}
This expression holds regardless of whether the LLM's error is treated as fixed or estimated, provided the variance estimator $\widehat{\sigma}_{k}$ correctly accounts for the variability of both terms.
\end{remark}

\section{Estimation and Inference: Theory}\label{sec:InferenceTheory}

This section presents theoretical results justifying our proposed procedure in Section \ref{sec:Procedure}. It is organized as follows: First, Section \ref{subsec:SeqTest} proves the validity of the sequential testing procedure described in Section \ref{sec:CI}, taking as given the validity of the individual tests of the hypotheses $H_{0,k}: e^a_{N_k} \le e_{\mathrm{LLM}}$. Section \ref{subsec:bCV} then establishes a central limit theorem for the block-out CV error estimator of each $e^a_{N_k}$, which yields  pointwise confidence intervals for the error curve $\{e^a_{N_k}\}_{k\geq 1}$. Finally, Section \ref{subsec:Test_k} uses this central limit theorem to establish the validity of individual test of each $H_{0,k}$.

\subsection{Validity of Sequential Hypothesis Testing}
\label{subsec:SeqTest}

The validity of Algorithm \ref{alg:seq_test_0} in terms of overall size control is established as follows.

\begin{theorem}[Validity of Sequential ESS Inference]
\label{thm:SeqTest}
Suppose that Assumption \ref{assu:Monotone} holds. 
Assume that for each $k$, the test of $H_{0,k}$ employed in Algorithm \ref{alg:seq_test_0} has asymptotic size controlled by $\alpha$, i.e.,
\begin{equation} \label{eq:singletest_valid}
 \limsup_{n \to \infty} \mathrm{Pr}(\text{reject } H_{0,k} \mid H_{0,k} \text{ is true}) \le \alpha.
\end{equation}
Then, the sequential testing procedure is valid at overall level $\alpha$ and the resulting CI for $N^*$ has a correct coverage:
\[
\liminf_{n \to \infty} \mathrm{Pr}(N^* \ge \widehat{N}_{\alpha}^{*}) \ge 1 - \alpha.
\]
\end{theorem}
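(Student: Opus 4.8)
The plan is to exploit the key structural feature emphasized in the text: the null hypotheses $H_{0,k}: e^a_{N_k} \le e_{\mathrm{LLM}}$ are \emph{nested} under the monotonicity assumption, so that the sequential procedure can be analyzed without any multiple-testing correction. First I would observe that, under Assumption \ref{assu:Monotone}, the error curve $e_{N_1}^a \ge e_{N_2}^a \ge \cdots \ge e_{N_K}^a$ is nonincreasing, so the truth values of the hypotheses are monotone: if $H_{0,k}$ is true (i.e.\ $e_{N_k}^a \le e_{\mathrm{LLM}}$), then $e_{N_j}^a \le e_{N_k}^a \le e_{\mathrm{LLM}}$ for all $j \ge k$, hence $H_{0,j}$ is true as well. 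Equivalently, there is a well-defined threshold index $k^* \equiv \min\{k : e_{N_k}^a \le e_{\mathrm{LLM}}\}$ (with $k^*=\infty$ if no such $k$ exists), and $H_{0,k}$ is true precisely when $k \ge k^*$. By the definition of $N^*$ together with the grid conventions in Algorithm \ref{alg:seq_test_0}, one has $N^* \ge N_{k^*-1}+1$ (with the understanding $N_0 + 1 = 1$), which is the target against which $\widehat N_\alpha^*$ must be compared.

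Next I would translate the event $\{\widehat N_\alpha^* > N^*\}$ — the coverage-failure event — into a statement about the first accepted hypothesis. The procedure stops and outputs $\widehat N_\alpha^* = N_{k-1}+1$ at the first step $k$ where $H_{0,k}$ is not rejected (and $\widehat N_\alpha^* = N_K + 1$ if all are rejected). Let $\widehat k$ be this stopping index. The failure event $\{\widehat N_\alpha^* > N^*\}$ requires $N_{\widehat k - 1} + 1 > N^* \ge N_{k^* - 1} + 1$, which forces $\widehat k - 1 \ge k^*$, i.e.\ $\widehat k \ge k^* + 1$. But $\widehat k \ge k^*+1$ means the procedure rejected $H_{0,1}, \ldots, H_{0,k^*}$ in particular — and $H_{0,k^*}$ is a \emph{true} null. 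Hence $\{\widehat N_\alpha^* > N^*\} \subseteq \{\text{$H_{0,k^*}$ is rejected}\}$ (this containment should be verified carefully in the boundary case $k^* = \infty$, where the event on the left is empty). Therefore
\[
\Pr(\widehat N_\alpha^* > N^*) \le \Pr(\text{reject } H_{0,k^*} \mid H_{0,k^*} \text{ true}),
\]
and taking $\limsup_{n\to\infty}$ on both sides and invoking the assumed per-test validity \eqref{eq:singletest_valid} for the fixed index $k^*$ gives $\limsup_n \Pr(\widehat N_\alpha^* > N^*) \le \alpha$, i.e.\ $\liminf_n \Pr(N^* \ge \widehat N_\alpha^*) \ge 1-\alpha$.

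The main obstacle — really the only subtlety — is bookkeeping around the grid coarseness and the boundary indices: one must be careful that the ``$+1$'' offsets in the algorithm and in the definition of $N^*$ line up so that the implication $\widehat N_\alpha^* > N^* \Rightarrow \widehat k > k^*$ is tight, and that the case $k^* = \infty$ (no true null, $N^* = \infty$) as well as the case $k^* = 1$ are handled without an off-by-one error. A clean way to avoid case analysis is to note that the sequential procedure only ever queries tests in the order $k = 1, 2, \ldots$ and stops at the first acceptance, so that for the event of reaching and rejecting step $k^*$ to occur, step $k^*$ must have been reached and its (true) null rejected; since $k^*$ is deterministic, no union bound over $k$ is needed and the single-test guarantee suffices verbatim. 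I would also remark that the argument uses Assumption \ref{assu:Monotone} \emph{only} through the monotone truth-value structure of the $H_{0,k}$, which is exactly what licenses dropping the multiple-testing correction.
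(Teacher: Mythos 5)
Your proposal is correct and follows essentially the same argument as the paper's proof: define the threshold index $k^*$, use monotonicity to get the nested truth structure of the nulls and the bound $N^* \ge N_{k^*-1}+1$, reduce the coverage-failure event to the rejection of the first true null $H_{0,k^*}$, and apply the single-test size guarantee at that one deterministic index. The only cosmetic difference is that the paper routes the argument through an explicit intermediate coverage statement for $\underline{N}_{k^*} = N_{k^*-1}+1$ before porting it to $N^*$, whereas you establish the event containment $\{\widehat N_\alpha^* > N^*\} \subseteq \{\text{reject } H_{0,k^*}\}$ directly; these are logically equivalent.
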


The proof of Theorem \ref{thm:SeqTest} is short but reveals a structural feature of the sequential test that avoids the size distortions typical of multiple hypothesis testing. We therefore present it in the main text.

\begin{proof} 
Define
 $k^{*}\equiv\min\left\{k:e_{N_{k}}\le e_{\mathrm{LLM}}\right\}$ 
to be the index of the equivalent sample size in our sequence $\{N_1,...,N_K\}$. Note that $N_{k^*-1} + 1 \leq N^* \leq  N_{k^*}$ by the coarseness of the grid $\{N_k\}$.

Write $\underline{N}_{k} \equiv N_{k-1}+1$ with $N_0\equiv0$. We will show that $[\widehat{N}_{\alpha}^{*},\infty)$ is a valid $(1-\alpha)$-CI for $\underline{N}_{k^*}$, i.e., 
\begin{equation}\label{eq:cover_underN}
\liminf_{n \to \infty} \mathrm{Pr}(\underline{N}_{k^*} \ge \widehat{N}_{\alpha}^{*}) \ge 1 - \alpha.    
\end{equation}
Then, since $\underline{N}_{k^*} \leq N^*$  and consequently $\{\underline{N}_{k^*} \ge \widehat{N}_{\alpha}^{*}\} \subseteq \{N^* \ge \widehat{N}_{\alpha}^{*}\}$, we can conclude that  $[\widehat{N}_{\alpha}^{*},\infty)$ is also valid for $N^*$ based on
\begin{equation}\label{eq:CI_porting}
 \liminf_{n \to \infty} \mathrm{Pr}(N^* \ge \widehat{N}_{\alpha}^{*})  \ge \liminf_{n \to \infty} \mathrm{Pr}(\underline{N}_{k^*} \ge \widehat{N}_{\alpha}^{*}) \ge 1 - \alpha.   
\end{equation}

We now prove \eqref{eq:cover_underN}. 
By Assumption \ref{assu:Monotone}, the sequence of null hypotheses $H_{0,k}: e_{N_k} \le e_{\mathrm{LLM}}$ has a nested truth structure:
\begin{itemize}
    \item[](S1) For all $k < k^*$, we have $e_{N_k} > e_{\mathrm{LLM}}$, so the null $H_{0,k}$ is \emph{false}.
    \item[](S2) For all $k \ge k^*$, we have $e_{N_k} \le e_{\mathrm{LLM}}$, so the null $H_{0,k}$ is \emph{true}.
\end{itemize}
Thus, $H_{0,k^*-1}$ is the \emph{last false null hypothesis} in the sequence, while $H_{0,k^*}$ is the \emph{first true null hypothesis} in the sequence.

Now, let $\widehat{k}$ denote the random index at which Algorithm \ref{alg:seq_test_0} terminates. The CI lower bound is defined as $\widehat{N}_{\alpha}^{*}  = \underline{N}_{\widehat{k}}$. 
The algorithm stops at step $k$ if it fails to reject $H_{0,k}$. Therefore, the event $\{\widehat{k} = k\}$ implies that $H_{0,k}$ was not rejected, while all preceding hypotheses $H_{0,1}, \dots, H_{0,k-1}$ were rejected.
Consequently, the event $\{\widehat{N}_{\alpha}^{*} > \underline{N}_{k^*}\} = \{\underline{N}_{\widehat{k}} > \underline{N}_{k^*}\}$ is equivalent to $\{\widehat{k} > k^*\}$.

For the algorithm to go beyond step $k^*$ (i.e., for $\widehat{k} > k^*$), it must have rejected the hypothesis $H_{0,k^*}$. Conversely, if $H_{0,k^*}$ is rejected, then $\widehat{k} > k^*$. 
Therefore,
\[
\Pr( \widehat{N}_{\alpha}^{*} > \underline{N}_{k^*} ) = \Pr\left( \text{reject } H_{0,k^*} \right).
\]
Since $H_{0,k^*}$ is true by (S2), the rejection of $H_{0,k^*}$ constitutes a Type I error (a false rejection).
By the condition of the theorem, the test for each individual step controls the asymptotic size at level $\alpha$. Specifically, for the step $k^*$, we have $\limsup_{n \to \infty} \Pr\left( \text{reject } H_{0,k^*} \right) \le \alpha$. 

Combining these results, we obtain the bound on the coverage probability:
\begin{align*}
\liminf_{n \to \infty} \Pr\left( N^* \ge \widehat{N}_{\alpha}^{*} \right) 
&= 1 - \limsup_{n \to \infty} \Pr\left( \widehat{N}_{\alpha}^{*} > N^{*} \right) \\
&\ge 1 - \limsup_{n \to \infty} \Pr\left( \text{reject } H_{0,k^*} \right) 
\ \ge 1 - \alpha.
\end{align*}
which implies the desired conclusion in view of \eqref{eq:CI_porting}.
\end{proof}

An important feature of this procedure is that it controls the family-wise error rate (FWER) (i.e., the probability of rejecting any true nulls) without requiring Bonferroni-type corrections \citep[e.g.,][]{holm1979simple}. This is because the sequence of hypotheses are ordered (nested) and the sequential testing procedure stops at the first non-rejection. Thus we only test true null hypotheses if we have correctly rejected all preceding false nulls, so the probability of making \emph{any} false rejection is bounded simply by the probability of rejecting the \emph{first} true null $H_{0,k^*}$. This property is particularly valuable given that we will typically want to evaluate a wide range of sample sizes.

\begin{remark}[Impact of the Coarse Grid $\{N_k\}_{k=1}^K$] Let $\widehat{N}^*_{\alpha,\mathrm{fine}}$ denote the cutoff $\widehat{N}_\alpha^*$ obtained under the ``finest grid'' $N_k = k$. Then $\widehat{N}^*_{\alpha,\mathrm{fine}}$ coincides with the smallest integer $k$ at which we reject the null in Algorithm \ref{alg:seq_test_0}, which would produce a ``sharp'' $(1-\alpha)$ CI for $N^*$ in the form of $[\widehat{N}^*_{\alpha,\mathrm{fine}},\infty)$. For a generic coarse grid $\{N_k\}$, the cutoff $\widehat{N}^*_\alpha$ is conservative against LLM in the sense of $\widehat{N}^*_{\alpha} \leq \widehat{N}^*_{\alpha,\mathrm{fine}}$. Equivalently, we may view $[\widehat{N}^*_{\alpha,\mathrm{fine}},\infty)$ as a sharp CI for $\underline{N}_{k^*} \leq N^*$, which results in conservative one-sided coverage for $N^*$.
\end{remark}

The validity of our sequential test, as established in Theorem \ref{thm:SeqTest}, requires a valid test statistic at each step.  Recall in Section \ref{sec:CI} that we use the studentized test statistic \eqref{eq:t_stat} for the boundary of the null hypothesis, $\tilde H_{0,k}: e_{N_k} = e_{\mathrm{LLM}}$, i.e., the null where the machine learning algorithm and the LLM have equal expected error. Since this is the least favorable configuration within the null---i.e., the case where false rejection is most
likely---a level-$\alpha$ test of $\tilde H_{0,k}$ is also a valid level-$\alpha$ test of the original null hypothesis $H_{0,k}$.
 An asymptotic one-sided test rejects $H_{0,k}$ at level $\alpha$ if $T_{N_{k},n} > z_{1-\alpha}$, where $z_{1-\alpha}$ is the $(1-\alpha)$-quantile of the standard normal distribution.\footnote{We reject the null (that ML is as good as LLM) when the ML error is significantly \emph{higher} than the LLM error. Thus, the rejection region is the right tail.}
The validity of this individual test follows from Proposition \ref{prop:test_validity_corr} in Section \ref{subsec:Test_k}, which is based on  the Central Limit Theorem and consistent variance estimation in Section \ref{subsec:bCV}.

\subsection{Central Limit Theorem under Block-Out Cross Validation}
\label{subsec:bCV}

This section develops a Central Limit Theorem for risk estimators using the block-out CV procedure described in Section \ref{sec:EstimateML}. Our established CLT in Theorem \ref{thm:CLT-bCV}, and the corresponding consistent variance estimator in Theorem \ref{thm:CLT_varest} are established under very mild conditions without the requirement of stability conditions \citep{lei2025modern}, and may be of independent interest beyond the scope of this paper.

Specifically, we consider a setting where researchers have access to a single large data set with sample size $n$. For each candidate training size $N_k$, we estimate the out-of-sample error $e^a_{N_k}$ of any learning algorithm $a$ via block-out CV. We work with an asymptotic framework where $n \to \infty$ while $N_k$ remains fixed.\footnote{In Appendix \ref{sec:regimeB}, we consider an alternative asymptotic framework where $N_k \to \infty$ proportionally to $n$ (i.e., $B$ is fixed), and establish the validity of our approach under an alternative set of assumptions.} These CIs in turn produce CIs for $N^*$ as discussed in Remark \ref{rem:CI_duality}.

We work under the following sampling framework and regularity conditions. Let $Z \equiv (X, Y) \sim P$ denote the distribution of observed data. 

\begin{assumption}[Random Sampling]
\label{assu:sampling} A random sample of data $D_n\equiv\left(Z_1,...,Z_n\right)$ is observed, with 
$Z_i \equiv (X_i, Y_i) \sim_\text{i.i.d.} Z\equiv (X,Y) \sim P$.
\end{assumption}

For each given $N\leq n$ and a sample $D_N$ of size $N$, we define a generic learning algorithm as a measurable map $a: \mathcal{Z}^N \to \mathcal{F}$, where $\mathcal{F}$ is a class of measurable functions $f: \mathcal{X} \to \mathcal{Y}$. For any training sample $D_N \in \mathcal{Z}^N$, recall that we write $f_{N} = a(D_N)$. 

We consider a generic loss function $\ell$ that satisfies the following conditions:
\begin{assumption}[Loss and Moments]
\label{assu:moments}
The loss function $\ell: \mathcal{Y} \times \mathcal{Y} \to \mathbb{R}$ is measurable and, for $D_N \sim P^N$ and an independent $Z = (X, Y) \sim P$,
\[
\mathbb{E}\bigl[\,|\ell(f_{N}(X), Y)|^{2+\delta}\bigr] < \infty \quad \text{for some } \delta > 0.
\]
\end{assumption}

Recall that $e(f_{N}) \equiv \mathbb{E}_Z\bigl[\ell(f_{N}(X), Y) \mid D_N,\bigr]$ and $e^a_N \equiv \mathbb{E}_{D_N}[e(f_{N})]$. In addition, for each realization $z = (x, y) \in \mathcal{Z}$, define
\[
e^a_N(z) \equiv \mathbb{E}_{D_N}\bigl[\ell(f_{D_N}(x), y)\bigr],
\]
where the expectation is over an independent copy $D_N \sim P^N$ with $z$ fixed. Note that $e^a_N = \mathbb{E}_Z[e^a_N(Z)]$. Under Assumption \ref{assu:moments}, $e(f_{N})$, $e^a_N(z)$, and $e^a_N$ are well-defined with $\mathbb{E}\bigl[e(f_{N})^2\bigr],\  \mathbb{E}\bigl[e^a_N(Z)^2\bigr] < \infty$. 
We now present our main result.

\begin{theorem}[CLT for block-out CV Risk Estimator]
\label{thm:CLT-bCV}
Suppose Assumptions \ref{assu:sampling}-\ref{assu:moments} hold and $N_k$ is fixed while $n \to \infty$. Then
\[
\sqrt{n}\bigl(\widehat{e}^{CV}_{N_k} - e^a_{N_k}\bigr) \xrightarrow{d} \mathcal{N}\bigl(0, \sigma^2_{N_k}\bigr),
\]
where the asymptotic variance $\sigma^2_{N_k}$ is given by
\begin{equation}
\label{eq:asym-var}
\sigma^2_{N_k} \equiv N_k V_{N_k,\mathrm{train}} + V_{N_k,\mathrm{test}} + 2 N_k C_{N_k},
\end{equation}
with $V_{N_k,\mathrm{train}} \equiv \mathrm{Var}\bigl(e(f_{N_k})\bigr)$, $V_{N_k,\mathrm{test}} \equiv \mathrm{Var}\bigl(e^a_{N_k}(Z)\bigr)$, and $C_{N_k} \equiv \mathrm{Cov}\bigl(e(f_{N_k}), e^a_{N_k}(Z_1)\bigr)$, where $Z_1 \in \mathcal{D}_{N_k}$.
\end{theorem}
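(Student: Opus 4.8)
The plan is to expand the centered block-out CV estimator into an average of i.i.d.\ contributions indexed by the $B_k$ training blocks, plus a remainder that vanishes asymptotically, and then apply a bivariate central limit theorem over blocks. Write $N\equiv N_k$, $B\equiv B_k$, and $f^{(b)}\equiv a(\mathcal{D}^{(k,b)}_{\mathrm{train}})$. First I would carry out a Hoeffding-type (ANOVA) decomposition of each out-of-block loss,
\[
\ell\bigl(f^{(b)}(X_i),Y_i\bigr) = e^a_N + \alpha_b + \beta_i + \gamma_{b,i}, \qquad i\notin S_{k,b},
\]
where $\alpha_b\equiv e(f^{(b)})-e^a_N$ captures the effect of the $b$-th \emph{training} block, $\beta_i\equiv e^a_N(Z_i)-e^a_N$ captures the effect of the $i$-th \emph{test} observation, and $\gamma_{b,i}$ is the doubly-centered interaction, with $\mathbb{E}[\gamma_{b,i}\mid \mathcal{D}^{(k,b)}_{\mathrm{train}}]=0$ and $\mathbb{E}[\gamma_{b,i}\mid Z_i]=0$ (the latter since $Z_i$ is independent of $\mathcal{D}^{(k,b)}_{\mathrm{train}}$ for $i\notin S_{k,b}$). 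Averaging over $i\notin S_{k,b}$ and then over $b$, using $B M_n = n(B-1)$ (which holds under the normalization $n=B_kN_k$) and the fact that each index $i$ lies outside exactly $B-1$ of the blocks, this yields
\[
\sqrt{n}\bigl(\widehat{e}^{CV}_N - e^a_N\bigr) = \sqrt{N}\,\frac{1}{\sqrt{B}}\sum_{b=1}^{B}\alpha_b + \frac{1}{\sqrt{n}}\sum_{i=1}^{n}\beta_i + R_n,
\qquad R_n \equiv \frac{\sqrt{n}}{B M_n}\sum_{b=1}^{B}\sum_{i\notin S_{k,b}}\gamma_{b,i}.
\]

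The key observation is that both leading sums can be re-indexed by block. Setting $W_b\equiv N^{-1/2}\sum_{i\in S_{k,b}}\beta_i$, we have $\sum_{i=1}^{n}\beta_i = \sqrt{N}\sum_{b}W_b$, so the leading term equals $B^{-1/2}\sum_b(\sqrt{N}\,\alpha_b + W_b)$, an average of random variables each measurable with respect to the observations in block $b$ alone, hence i.i.d.\ across $b$. I would then apply the classical multivariate CLT to $B^{-1/2}\sum_b(\alpha_b,W_b)$: the summands are mean-zero with finite second moments (by Assumption \ref{assu:moments}, using that each $f^{(b)}$ is trained on data independent of its test points), and their covariance matrix has entries $\mathrm{Var}(\alpha_b)=V_{N,\mathrm{train}}$, $\mathrm{Var}(W_b)=V_{N,\mathrm{test}}$, and $\mathrm{Cov}(\alpha_b,W_b)=\sqrt{N}\,C_N$ — the last because within-block exchangeability of the $N$ training observations makes $\mathrm{Cov}(e(f^{(b)}),e^a_N(Z_i))$ equal to $C_N$ for every $i\in S_{k,b}$. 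Consequently the leading term converges in distribution to $\mathcal{N}\bigl(0,\,NV_{N,\mathrm{train}} + V_{N,\mathrm{test}} + 2NC_N\bigr)=\mathcal{N}(0,\sigma^2_{N})$, the asserted limit.

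The remaining step — and, I expect, the principal obstacle — is to show $R_n = o_p(1)$; this is where the block-out structure must be handled with care, since the test set of each round overlaps the training set of every other round. I would bound $\mathrm{Var}(R_n) = n\,(B M_n)^{-2}\,\mathrm{Var}\bigl(\sum_b\sum_{i\notin S_{k,b}}\gamma_{b,i}\bigr)$ by determining which pairs $(b,i),(b',i')$ contribute a nonzero $\mathbb{E}[\gamma_{b,i}\gamma_{b',i'}]$. The double centering forces this covariance to vanish whenever $b=b'$ (condition on $\mathcal{D}^{(k,b)}_{\mathrm{train}}$, given which $Z_i$ and $Z_{i'}$ are independent with conditional mean zero) and whenever $i=i'$ (condition on $Z_i$); and for $b\neq b'$, $i\neq i'$, conditioning on $Z_i$ (or $Z_{i'}$) shows it vanishes unless $Z_i$ and $Z_{i'}$ each lie in the other round's training block, i.e.\ $i\in S_{k,b'}$ and $i'\in S_{k,b}$ simultaneously. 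There are only $O(B^2N^2)=O(n^2)$ such surviving off-diagonal pairs — the same order as the $BM_n=O(n^2)$ diagonal terms — and each covariance is $O(1)$ by Cauchy--Schwarz and Assumption \ref{assu:moments}, so $\mathrm{Var}\bigl(\sum_b\sum_{i\notin S_{k,b}}\gamma_{b,i}\bigr)=O(n^2)$ while $(BM_n)^2=O(n^4)$, giving $\mathrm{Var}(R_n)=O(1/n)\to 0$ and hence $R_n\to 0$ in probability by Chebyshev's inequality. A final application of Slutsky's lemma combines this with the limit of the leading term to conclude $\sqrt{n}(\widehat{e}^{CV}_{N_k}-e^a_{N_k})\xrightarrow{d}\mathcal{N}(0,\sigma^2_{N_k})$.
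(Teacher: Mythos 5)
Your proposal is correct and follows essentially the same route as the paper's proof: the ANOVA decomposition $\ell(f^{(b)}(X_i),Y_i)=e^a_N+\alpha_b+\beta_i+\gamma_{b,i}$ is exactly the paper's split into the training contribution $A_n$, the first-order H\'{a}jek projection $T_n$ of the test contribution, and a doubly-degenerate remainder, and your variance-counting argument for $R_n$ (only pairs with $i\in S_{k,b'}$ and $i'\in S_{k,b}$ survive, giving $O(n^2)$ nonzero terms against an $O(n^4)$ normalization) matches the paper's bound $\mathbb{E}[\mathcal{R}_n^2]=O(n^{-2})$. The one genuine difference is how the joint limit is obtained: the paper computes $\mathrm{Cov}(A_n,T_n)\to N C_N/n$ separately and then asserts that the pair is asymptotically bivariate normal, whereas you re-index $\sum_i\beta_i$ by block via $W_b=N^{-1/2}\sum_{i\in S_{k,b}}\beta_i$ so that the entire leading term becomes $B^{-1/2}\sum_b(\sqrt{N}\alpha_b+W_b)$, an i.i.d.\ sum over blocks to which the classical CLT applies directly. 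This packaging delivers the cross term $2N_kC_{N_k}$ and the joint normality in one stroke and is, if anything, cleaner than the paper's treatment of that step; everything else, including the use of Assumption \ref{assu:moments} for the finite second moments and the exact identity $B_kM_n=n(B_k-1)$, checks out.
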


Importantly, we note that the asymptotic variance of the block-out CV estimator features three components: $V_{N_k, \text{train}}$ that captures the training-sample randomness contribution, $V_{N_k, \text{test}}$ that captures the testing-sample randomness contribution, and $C_{N_k}$ that reflects the fact that each observation appears both in one training block and in the test sets of all other blocks, a key feature of cross-validation approaches.

\subsubsection{Asymptotic Variance Estimation} \label{sec:AsympVar}

We now construct an estimator $\widehat{\sigma}^2_{N_k}$ of the asymptotic variance $\sigma^2_{N_k}$ by plugging in estimators for each component of the three-term decomposition in \eqref{eq:asym-var}, as described below. 

\paragraph{Estimator of $V_{N_k,\text{train}}$}
For each block $b$, define the block-wise population risk $e_{b,\mathrm{train}} \equiv \mathbb{E}\bigl[\ell(f^{(b)}_{N_k}(X), Y) \mid \mathcal{D}^{(k,b)}_{\mathrm{train}}\bigr]$. The $(e_{b,\mathrm{train}})_{b=1}^{B_k}$ are i.i.d.\ with mean $e^a_{N_k}$ and variance $V_{N_k,\mathrm{train}}$. We estimate $V_{N_k,\mathrm{train}}$ by the sample variance of the block-wise CV risks:
\[
\widehat{V}_{N_k,\mathrm{train}} \equiv \frac{1}{B_k - 1} \sum_{b=1}^{B_k} \bigl(\widehat{e}_{k,b} - \widehat{e}^{CV}_{N_k}\bigr)^2.
\]

\paragraph{Estimator of $V_{N_k,\text{test}}$}
Each $i$ belongs to exactly one training block and appears in the test sets of all other $B_k - 1$ blocks. Define the block-averaged test loss for observation $i$:
\[
\widehat{\mu}_i \equiv \frac{1}{B_k - 1} \sum_{b : i \notin S_{k,b}} \ell\bigl(f^{(b)}_{N_k}(X_i), Y_i\bigr), \qquad i = 1, \ldots, n.
\]
As $B_k \to \infty$, $\widehat{\mu}_i \xrightarrow{p} e^a_{N_k}(Z_i)$. We estimate $V_{N_k,\mathrm{test}}$ by:
\[
\widehat{V}_{N_k,\mathrm{test}} \equiv \frac{1}{n-1} \sum_{i=1}^{n} \bigl(\widehat{\mu}_i - \bar{\mu}\bigr)^2, \qquad \bar{\mu} \equiv \frac{1}{n} \sum_{i=1}^{n} \widehat{\mu}_i.
\]

\paragraph{Estimator of $C_{N_k}$}
For each block $b$, define the block-wise summaries
\[
\widehat{e}_b \equiv \widehat{e}_{k,b} , \qquad \widehat{m}_b \equiv \frac{1}{N_k} \sum_{i \in S_{k,b}} \widehat{\mu}_i, \qquad b = 1, \ldots, B_k.
\]
Let $\bar{r} \equiv B_k^{-1} \sum_{b=1}^{B_k} \widehat{e}_b$ and $\bar{m} \equiv B_k^{-1} \sum_{b=1}^{B_k} \widehat{m}_b$. We estimate the cross term by:
\[
\widehat{C}_{N_k} \equiv \frac{1}{B_k - 1} \sum_{b=1}^{B_k} (\widehat{e}_b - \bar{r})(\widehat{m}_b - \bar{m}).
\]

Combining the above, we construct the asymptotic variance estimator as
\begin{equation}
\label{eq:plugin-var}
\widehat{\sigma}^2_{N_k} \equiv N_k \widehat{V}_{N_k,\mathrm{train}} + \widehat{V}_{N_k,\mathrm{test}} + 2 N_k \widehat{C}_{N_k},
\end{equation}
whose validity is established in the following theorem.

\begin{theorem}[Studentized CLT]
\label{thm:CLT_varest}
Suppose Assumptions \ref{assu:sampling} and \ref{assu:moments} hold and $N_k$ is fixed while $n \to \infty$. Let $\widehat{\sigma}^2_{N_k}$ be defined by \eqref{eq:plugin-var}. Then:
\begin{enumerate}
\item $\widehat{\sigma}^2_{N_k} \xrightarrow{p} \sigma^2_{N_k}$ as $n \to \infty$.
\item If $\sigma^2_{N_k} > 0$, then the studentized statistic satisfies
\[
\frac{\sqrt{n}\bigl(\widehat{e}^{CV}_{N_k} - e^a_{N_k}\bigr)}{\widehat{\sigma}_{N_k}} \xrightarrow{d} \mathcal{N}(0, 1).
\]
\end{enumerate}
\end{theorem}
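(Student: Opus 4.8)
The plan is to prove the two parts of Theorem \ref{thm:CLT_varest} in sequence, with the consistency of $\widehat{\sigma}^2_{N_k}$ (Part 1) being the substantive content and Part 2 following immediately by Slutsky combined with the CLT of Theorem \ref{thm:CLT-bCV}. For Part 1, I would establish the consistency of each of the three plug-in components separately, namely $\widehat{V}_{N_k,\mathrm{train}} \xrightarrow{p} V_{N_k,\mathrm{train}}$, $\widehat{V}_{N_k,\mathrm{test}} \xrightarrow{p} V_{N_k,\mathrm{test}}$, and $\widehat{C}_{N_k} \xrightarrow{p} C_{N_k}$, and then conclude by the continuous mapping theorem since \eqref{eq:plugin-var} is a fixed linear combination (recall $N_k$ is held fixed) of these three.

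For $\widehat{V}_{N_k,\mathrm{train}}$: the key observation, already noted in the text, is that the block-wise population risks $e_{b,\mathrm{train}} = \mathbb{E}[\ell(f^{(b)}_{N_k}(X),Y)\mid \mathcal{D}^{(k,b)}_{\mathrm{train}}]$ are i.i.d.\ across $b$ with mean $e^a_{N_k}$ and variance $V_{N_k,\mathrm{train}}$, and under Assumption \ref{assu:moments} they have a finite $(1+\delta/2)$ moment so the usual WLLN for their sample variance applies as $B_k \to \infty$. The wrinkle is that the estimator uses the \emph{empirical} block errors $\widehat{e}_{k,b}$ rather than the population errors $e_{b,\mathrm{train}}$; so I would write $\widehat{e}_{k,b} = e_{b,\mathrm{train}} + \xi_{k,b}$ where $\xi_{k,b}$ is the test-sample averaging error, argue that $\mathbb{E}[\xi_{k,b}^2 \mid \mathcal{D}^{(k,b)}_{\mathrm{train}}] = O(1/M_n) = o(1)$ uniformly, and show the contribution of the $\xi$ terms to the sample variance is $o_p(1)$ via Markov/Cauchy-Schwarz. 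For $\widehat{V}_{N_k,\mathrm{test}}$: the target-level objects $e^a_{N_k}(Z_i)$ are i.i.d.\ with finite second moment, so the sample variance of the \emph{infeasible} quantities $e^a_{N_k}(Z_i)$ is consistent; the work is controlling the replacement error $\widehat{\mu}_i - e^a_{N_k}(Z_i)$. Conditional on $Z_i$, $\widehat{\mu}_i$ is an average of $B_k - 1$ i.i.d.\ terms (the blocks not containing $i$ are independent and identically distributed given $Z_i$) with conditional mean $e^a_{N_k}(Z_i)$, so its conditional variance is $O(1/B_k) = o(1)$; pushing this through the sample second moment (again using $2+\delta$ moments for uniform integrability when squaring) gives $\widehat{V}_{N_k,\mathrm{test}} \xrightarrow{p} V_{N_k,\mathrm{test}}$. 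The cross-term $\widehat{C}_{N_k}$ is handled analogously: $\widehat{e}_b$ and $\widehat{m}_b$ are both consistent (in the same conditional-averaging sense) for $e_{b,\mathrm{train}}$ and for $N_k^{-1}\sum_{i\in S_{k,b}} e^a_{N_k}(Z_i)$ respectively, and the population covariance between these two block summaries is exactly $C_{N_k}$ (since each $Z_i$ in $S_{k,b}$ contributes $\mathrm{Cov}(e(f_{N_k}), e^a_{N_k}(Z_1))/N_k$ and there are $N_k$ of them, against the shared training randomness), so a sample-covariance WLLN plus the replacement-error control finishes it.

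The main obstacle I anticipate is making the replacement-error arguments fully rigorous \emph{simultaneously} with the sample-variance limit — i.e., showing that terms like $\frac{1}{n}\sum_i (\widehat{\mu}_i - e^a_{N_k}(Z_i))^2$ and cross terms $\frac{1}{n}\sum_i (\widehat{\mu}_i - e^a_{N_k}(Z_i)) e^a_{N_k}(Z_i)$ are $o_p(1)$. These require a uniform (in $i$) bound on conditional mean-squared replacement error together with a uniform-integrability argument to control the squared target terms, which is where the $2+\delta$ moment assumption does real work (rather than just bare second moments). A clean way to organize this is to prove a single lemma: if $a_i = b_i + c_i$ with $\frac{1}{n}\sum b_i^2 \xrightarrow{p} v$, $\mathbb{E}[\frac{1}{n}\sum c_i^2] \to 0$, and $\frac{1}{n}\sum b_i^2$ is bounded in probability, then $\frac{1}{n}\sum a_i^2 \xrightarrow{p} v$ (Cauchy–Schwarz on the cross term), and apply it to each component. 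Part 2 is then immediate: $\widehat{\sigma}_{N_k}\xrightarrow{p}\sigma_{N_k}>0$ gives $\sqrt{n}(\widehat{e}^{CV}_{N_k}-e^a_{N_k})/\widehat{\sigma}_{N_k} = \big(\sigma_{N_k}/\widehat{\sigma}_{N_k}\big)\cdot \sqrt{n}(\widehat{e}^{CV}_{N_k}-e^a_{N_k})/\sigma_{N_k} \xrightarrow{d}\mathcal{N}(0,1)$ by Theorem \ref{thm:CLT-bCV} and Slutsky.
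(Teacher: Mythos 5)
Your proposal is correct and follows essentially the same route as the paper: consistency of each of the three plug-in components is proved via an ``infeasible i.i.d.\ quantity plus vanishing replacement error'' decomposition (with the conditional-variance bound $O(1/B_k)$ for $\widehat{\mu}_i$ and Cauchy--Schwarz for the cross terms), and Part 2 then follows from Theorem \ref{thm:CLT-bCV} and Slutsky. The only minor divergence is in the training-variance term, where the paper controls $\max_b|\widehat e_{k,b}-e_{b,\mathrm{train}}|$ via Rosenthal's inequality plus a union bound (which is where the $2+\delta$ moments do real work), whereas you control the averaged squared replacement error $B_k^{-1}\sum_b \xi_{k,b}^2$ by Markov and Cauchy--Schwarz --- an equally valid variant that in fact requires only second moments for that step.
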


\subsection{A Valid Test of $H_{0,k}: e^a_{N_k} \le e_{\mathrm{LLM}}$}\label{subsec:Test_k}

We finally show how to construct a valid hypothesis test of $H_{0,k}: e^a_{N_k} \le e_{\mathrm{LLM}}$ based on Theorem \ref{thm:CLT_varest}, which can then be used to implement the sequential testing procedure in Algorithm \ref{alg:seq_test_0}.

We base our test on the difference between the estimated risks:
\[
\widehat{\Delta}_k \equiv \widehat{e}^{CV}_{N_k} - \widehat{e}_{\mathrm{LLM}}.
\]
Since both terms are computed on the same dataset, they are correlated.
To account for this correlation within the fixed-$N_k$ asymptotic framework, we apply the general block-out CV theory directly to the \emph{loss difference}.
Define the loss difference for a given predictor $f$ and observation $Z=(X,Y)$ as 
$$\delta(f, Z) \equiv \ell(f(X), Y) - \ell(f_{\mathrm{LLM}}(X), Y).$$
Note that the estimator $\widehat{\Delta}_k$ is algebraically equivalent to the block-out CV estimator applied to this difference loss function.\footnote{Note that $\widehat{e}_{\mathrm{LLM}}$ is the sample mean of the LLM loss. Since the LLM prediction rule is fixed, this is identical to the block-out CV estimator for a fixed function $f_{\mathrm{LLM}}$.} Applying Theorem \ref{thm:CLT_varest} to the loss difference $\delta$, we obtain the following result.

\begin{proposition}[Test Validity]
\label{prop:test_validity_corr}
Define the studentized test statistic:
\begin{equation}
\label{eq:test-stat-corr}
T_{\mathrm{diff},N_k,n} \equiv \frac{\widehat{e}^{CV}_{N_k} - \widehat{e}_{\mathrm{LLM}}}{\widehat{\sigma}_{\mathrm{diff}, k}/\sqrt{n}}.
\end{equation}
where $\widehat{\sigma}^2_{\mathrm{diff}, k}$ estimates the asymptotic variance of the difference. Under Assumptions \ref{assu:sampling} and \ref{assu:moments}, under the boundary of the null hypothesis ($e^a_{N_k} = e_{\mathrm{LLM}}$):
\[
T_{\mathrm{diff},N_k,n} \xrightarrow{d} \mathcal{N}(0, 1) \quad \text{as } n \to \infty.
\]
provided that $\sigma^2_{\mathrm{diff},k} >0$.  
Consequently, the one-sided test that rejects $H_{0,k}$ when $T_{N_k,n} > z_{1-\alpha}$ has asymptotic size $\alpha$.
\end{proposition}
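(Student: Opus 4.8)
The plan is to reduce Proposition \ref{prop:test_validity_corr} to the block-out CV results of Section \ref{subsec:bCV}, applied not to the original loss $\ell$ but to the \emph{difference loss functional}
\[
\delta(f, Z) \;\equiv\; \ell(f(X), Y) - \ell(f_{\mathrm{LLM}}(X), Y), \qquad Z = (X,Y).
\]
The starting point (already noted in the text) is the algebraic identity $\widehat{\Delta}_k = \widehat{e}^{CV}_{N_k} - \widehat{e}_{\mathrm{LLM}} = \widehat{e}^{CV}_{N_k}[\delta]$, where $\widehat{e}^{CV}_{N_k}[\delta]$ is the block-out CV estimator \eqref{eq:bCV-def} formed from the block-wise test losses $\ell(f^{(b)}_{N_k}(X_i), Y_i) - \ell(f_{\mathrm{LLM}}(X_i), Y_i)$. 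This holds because $\widehat{e}^{CV}_{N_k}$ is linear in the loss and because, for the data-independent rule $f_{\mathrm{LLM}}$, each observation enters exactly $B_k-1$ test folds while $M_n = (B_k-1)N_k$, so the block-out CV average of the LLM loss collapses to the full-sample mean $\widehat{e}_{\mathrm{LLM}}$. I take $\widehat{\sigma}^2_{\mathrm{diff},k}$ to be the plug-in variance estimator \eqref{eq:plugin-var} formed from the same difference losses, so that $T_{N_k,n}$ in \eqref{eq:t_stat} with the correlation-robust variance \eqref{eq:test-stat-corr} coincides with $T_{\mathrm{diff},N_k,n}$.

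First I would record that Theorems \ref{thm:CLT-bCV} and \ref{thm:CLT_varest}, although phrased for a loss of the form $\ell(\hat y, y)$, are proved using only joint measurability of the composed random variable in $(D_N, Z)$ and its $(2+\delta)$-th moment; they therefore apply verbatim to any measurable functional $L(a(D_N), Z)$ of the fitted predictor and the test point, in particular to $\delta(a(D_{N_k}), Z)$ (which depends on $X$ directly through $f_{\mathrm{LLM}}(X)$ and so is not literally of the form $\ell(\hat y, y)$). Equivalently, one may view $\widehat{e}_{\mathrm{LLM}}$ itself as a block-out CV estimator for the trivial ``algorithm'' $a_{\mathrm{LLM}} \equiv f_{\mathrm{LLM}}$ and invoke the theorems for that algorithm together with linearity. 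Next I would verify the moment condition via the $c_r$-inequality,
\[
\mathbb{E}\bigl[\,|\delta(f_{N_k}, Z)|^{2+\delta}\bigr] \le 2^{1+\delta}\Bigl(\mathbb{E}\bigl[\,|\ell(f_{N_k}(X), Y)|^{2+\delta}\bigr] + \mathbb{E}\bigl[\,|\ell(f_{\mathrm{LLM}}(X), Y)|^{2+\delta}\bigr]\Bigr),
\]
where the first term is finite by Assumption \ref{assu:moments} and the second is finite under the maintained condition that the LLM loss has a $(2+\delta)$-th moment---automatic when $\ell$ is bounded, as in classification. Hence Assumptions \ref{assu:sampling}--\ref{assu:moments} hold with $\ell$ replaced by $\delta$.

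The conclusion then follows by applying the two theorems to $\delta$ and invoking Slutsky. Theorem \ref{thm:CLT-bCV} gives $\sqrt{n}\bigl(\widehat{\Delta}_k - (e^a_{N_k} - e_{\mathrm{LLM}})\bigr) \xrightarrow{d} \mathcal{N}(0, \sigma^2_{\mathrm{diff},k})$, with $\sigma^2_{\mathrm{diff},k}$ given by the three-term formula \eqref{eq:asym-var} in terms of $\delta$; at the boundary $e^a_{N_k} = e_{\mathrm{LLM}}$ the centering vanishes, so $\sqrt{n}\widehat{\Delta}_k \xrightarrow{d} \mathcal{N}(0, \sigma^2_{\mathrm{diff},k})$. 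Theorem \ref{thm:CLT_varest} gives $\widehat{\sigma}^2_{\mathrm{diff},k} \xrightarrow{p} \sigma^2_{\mathrm{diff},k} > 0$, so $T_{\mathrm{diff},N_k,n} \xrightarrow{d} \mathcal{N}(0,1)$ and $\Pr(T_{\mathrm{diff},N_k,n} > z_{1-\alpha}) \to 1 - \Phi(z_{1-\alpha}) = \alpha$. To upgrade this to size control over the whole composite null $H_{0,k}: e^a_{N_k} \le e_{\mathrm{LLM}}$, note that at any interior configuration $e^a_{N_k} < e_{\mathrm{LLM}}$ one has $\widehat{\Delta}_k = (e^a_{N_k} - e_{\mathrm{LLM}}) + O_p(n^{-1/2})$, so $\sqrt{n}\widehat{\Delta}_k \to -\infty$ in probability while $\widehat{\sigma}_{\mathrm{diff},k}$ stays bounded away from $0$ and $\infty$; hence $T_{\mathrm{diff},N_k,n} \to -\infty$ and the rejection probability tends to $0$. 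Thus the boundary $e^a_{N_k} = e_{\mathrm{LLM}}$ is the least favorable configuration and the test has asymptotic size $\alpha$, which is exactly the hypothesis \eqref{eq:singletest_valid} needed by Theorem \ref{thm:SeqTest}.

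The only genuine obstacle is the first step: confirming that the block-out CV CLT and its variance estimator survive the replacement of a bona fide loss $\ell(\hat y, y)$ by the functional $\delta(f, Z)$, which is not of that form. This is really a matter of reading the proofs of Theorems \ref{thm:CLT-bCV}--\ref{thm:CLT_varest} and observing that nothing there uses the product structure of the argument of $\ell$---only joint measurability and the moment bound---so the extension is immediate; I would state this explicitly rather than re-derive the CLT. Everything downstream (the $c_r$-inequality moment check, the Slutsky step, and the drift-to-$-\infty$ argument on the null interior) is routine.
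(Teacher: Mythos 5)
Your proposal is correct and follows essentially the same route as the paper, which proves the proposition simply by observing that $\widehat{\Delta}_k$ is algebraically the block-out CV estimator for the difference loss $\delta$ and then invoking Theorem \ref{thm:CLT_varest} for that loss (with the least-favorable-configuration remark for the composite null already made in the text preceding the proposition). Your additional details---the $c_r$-inequality moment check, the observation that the CLT proofs use only measurability and moments of the composed loss rather than its $\ell(\hat y,y)$ structure, and the drift-to-$-\infty$ argument on the interior of the null---are exactly the routine steps the paper leaves implicit.
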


\begin{remark}[Conservative Alternative]
In practice, implementing the exact difference variance estimator $\widehat{\sigma}^2_{\mathrm{diff}, k}$ requires computing covariances between ML and LLM errors.
A simpler alternative is to use the sum of their individual variances, i.e.,
\begin{equation}\label{eq:t_stat_con}
    T_{\mathrm{con},N_k,n} \equiv \frac{\widehat{e}^{CV}_{N_k} - \widehat{e}_{\mathrm{LLM}}}{\sqrt{\left(\widehat{\sigma}^2_{N_k} + \widehat{\sigma}^2_{\mathrm{LLM}}\right)/n}}.
\end{equation}
Note that $\sigma^2_{\mathrm{diff}, k} = \sigma^2_{N_k} + \sigma^2_{\mathrm{LLM}} - 2\mathrm{Cov}(\sqrt{n}\widehat{e}^{CV}_{N_k}, \sqrt{n}\widehat{e}_{\mathrm{LLM}})$.
Assuming that the predictions of the ML model and the LLM are positively correlated (which is typical, as ``hard'' observations tend to yield high losses for both models), the covariance term is positive.
In this case, the sum of variances overestimates the true variance ($\widehat{\sigma}^2_{N_k} + \widehat{\sigma}^2_{\mathrm{LLM}} > \widehat{\sigma}^2_{\mathrm{diff}, k}$), making the test statistic $T_{N_k,n}$ smaller and the test conservative (size $< \alpha$).
\end{remark}

\section{Applications} \label{sec:Application}

\subsection{Data and Prediction Problems}

The Panel Study of Income Dynamics (PSID) is one of the longest-running
longitudinal household surveys in the United States, following individuals
and families since 1968. It collects detailed information on economic,
social, and health outcomes and is widely used in empirical research and
policy analysis. We use the 2021 PSID wave and  evaluate LLM prediction of  four outcomes $Y$: hourly wage ($Y \in \mathbb{R}_+$), homeownership ($Y \in \{\mbox{own},\mbox{rent}\}$), smoking $(Y \in \{0,1\}$), and drinking $(Y \in \{0,1\})$. In each prediction problem, the covariate vector $X$ denotes the respondent's  characteristics, including age,
sex, race, education, occupation, hours worked, labor market experience,
self-reported health, state of residence, marital status, number of children,
parents' education, religion, and the other  outcomes listed above except for the one being predicted.

For the continuous-valued outcome (hourly wage), we set the loss function to be squared-error loss $\ell(y,y') = (y' - y)^2$, although for readability we report the \emph{root mean-squared error}
\[\mathrm{RMSE}(f) = \sqrt{e(f)}\]
rather than the error $e(f)$.
For the other discrete outcomes (homeownership, drinking, and smoking) we set the usual 0-1 loss $\ell(y,y') = 1\{y' \neq y\},$
so error is the misclassification rate.

\subsection{LLM Prediction Rules}\label{subsec:application_LLM}

We generate LLM predictions as follows.  For each test observation
$i=1,\ldots,n$, we transform the covariate vector $X_i$
into a natural-language \emph{persona} description and then prompt the LLM to
predict $Y_i$. Importantly, we do not fine-tune the LLM on PSID data and do
not use few-shot examples drawn from PSID. Recall Prompt \ref{P_ex}.

A natural concern is potential data leakage \citep{ludwig2025large}, namely that PSID data could have been included---directly or indirectly---in
LLMs' training corpora. While access to PSID data requires registration and
human verification through the PSID Data Center, we nonetheless implement two
safeguards: (i) we evaluate a static LLM
with a documented training data descriptions and cutoffs:  OpenAI's ChatGPT-4 (cutoff: April 2023),\footnote{We access these models via an API. Based on interviews with current and former OpenAI employees, we expect that when ChatGPT is accessed via the API it is not subject to further fine-tuning (e.g., through reinforcement learning from human feedback), and that the reported knowledge cutoff is reliable.} and (ii) we use the 2021 wave of PSID released on October 2024,\footnote{\hyperlink{https://www.icpsr.umich.edu/web/sbeccc/studies/39190/versions/V1}{https://www.icpsr.umich.edu/web/sbeccc/studies/39190/versions/V1}} which postdates the reported knowledge
cutoffs of the LLMs. (See Appendix \ref{sec:PSID_more} for supplementary results considering additional LLMs whose cutoffs also predate the 2021 PSID release.)

\subsection{Domain-Specific Learning}

We compare the LLM's predictions to those of  supervised ML models
trained directly on PSID data. 
We consider two popular ML algorithms, which are briefly reviewed here. 

\paragraph{Lasso regression and Logit L1.} Lasso is a regularized linear prediction method. Different from standard OLS, it seeks to maximize fit to the data subject to a penalty for model complexity, evaluated as the sum of the absolute value of coefficients. Lasso regression thus encourages small coefficients, and often leads to sparse linear models in which many coefficients are set equal to zero. Logit L1 applies the same $\ell_1$-regularization principle in a logistic regression framework for binary outcomes.

\paragraph{Random forests.} Random forests are a flexible prediction method built by averaging many  decision trees. Each component tree is trained on a different subsample of the data and uses a different subset of covariates, and the final prediction aggregates across them. This procedure allows the algorithm to capture nonlinear relationships and interactions without requiring these features to be specified in advance. 

\bigskip

For each ML model, we compute the error curve, namely, the test error as a function of the training sample size $N$. We tune hyperparameters (e.g., the lasso penalty parameter, the depth and leaf size for RF) for every $N$ using $k$-fold CV; see Section \ref{sec:ML-detail} in the Appendix for the implementation details.
 These error curve estimates are subsequently compared to the LLM benchmark to compute the ESS.

\subsection{Results}

We estimate error curves for each prediction
problem and compute the equivalent sample size of the LLM. Figures \ref{fig:learning_curve1}--\ref{fig:learning_curve4} present the full estimated error curves, the  estimated equivalent sample size $\widehat{N}^*$, and  confidence intervals for all estimated quantities.  The reported confidence intervals for the block-out CV errors are calculated as hybrid CIs: they are based on the fixed-$N$ asymptotic theory (Theorem \ref{thm:CLT-bCV}) when $N\le400$ and the fixed-$B$ asymptotic theory (Corollary \ref{corol:CLT-regimeB} in Section \ref{sec:regimeB}) otherwise.\footnote{This choice is made based on calibrated simulation to ensure desirable coverage performance.} The confidence intervals for $N^*$ are calculated based on \eqref{eq:CI_duality} in Remark \ref{rem:CI_duality}. Table \ref{tab:ess_by_outcome} more succinctly presents our confidence intervals for  the equivalent sample size. 

\begin{table}[!htbp]
\centering
\caption{Equivalent sample size across prediction tasks} 
\label{tab:ess_by_outcome}

\begin{threeparttable}
\begin{adjustbox}{max width=\linewidth}
\begin{tabular}{lccc}
\toprule
Outcome & {LLM Error} & Algorithm & One-Sided CI\\
\midrule

Hourly Wage & 12.50  & Lasso     & $[20,\ \infty)$ \\
            &          & RF      & $[17,\ \infty)$ \\
\addlinespace

Home Ownership   & 0.24 & Logit L1       & $[676,\ \infty)$ \\
            &          & RF           & $[590,\ \infty)$ \\

\addlinespace

Drinking    & 0.4      & Logit L1 & $[59,\ \infty)$ \\
            &          & RF       & $[41,\ \infty)$ \\

\addlinespace

Smoking     & 0.23    & Logit L1      & $[1,\ \infty)$ \\
            &          & RF            & $[1,\ \infty)$ \\

\bottomrule
\end{tabular}
\end{adjustbox}

\end{threeparttable}
\end{table}

We find substantial heterogeneity in the LLM's predictive performance across outcomes. The LLM performs extremely well in predicting homeownership: at the 95\% confidence level, its equivalent sample size is at least 590 observations when benchmarked against a random forest and at least 676 observations when benchmarked against L1-penalized logit.

The LLM's performance is weaker, but still meaningful, for predicting drinking behavior. At the same confidence level, the LLM's equivalent sample size is at least 41 observations relative to the random forest and at least 59 observations relative to L1-penalized logit.

Predictive performance deteriorates further for hourly wages. Here, we can only conclude that the LLM's equivalent sample size is at least 17 observations when compared to the random forest and at least 20 observations when compared to L1-penalized logit.

Finally, the LLM provides essentially no predictive value for smoking behavior: we fail to reject the null hypothesis that conventional machine-learning algorithms outperform the LLM even when trained on a single observation.

Taken together, these results illustrate that the value of LLM predictions is highly task-specific---ranging from a close substitute for hundreds of observations in some settings to providing no meaningful predictive content in others. This underscores the importance of evaluating the LLM's predictive performance across domains.

\begin{figure}
\noindent \begin{centering}
\includegraphics[scale=0.6]{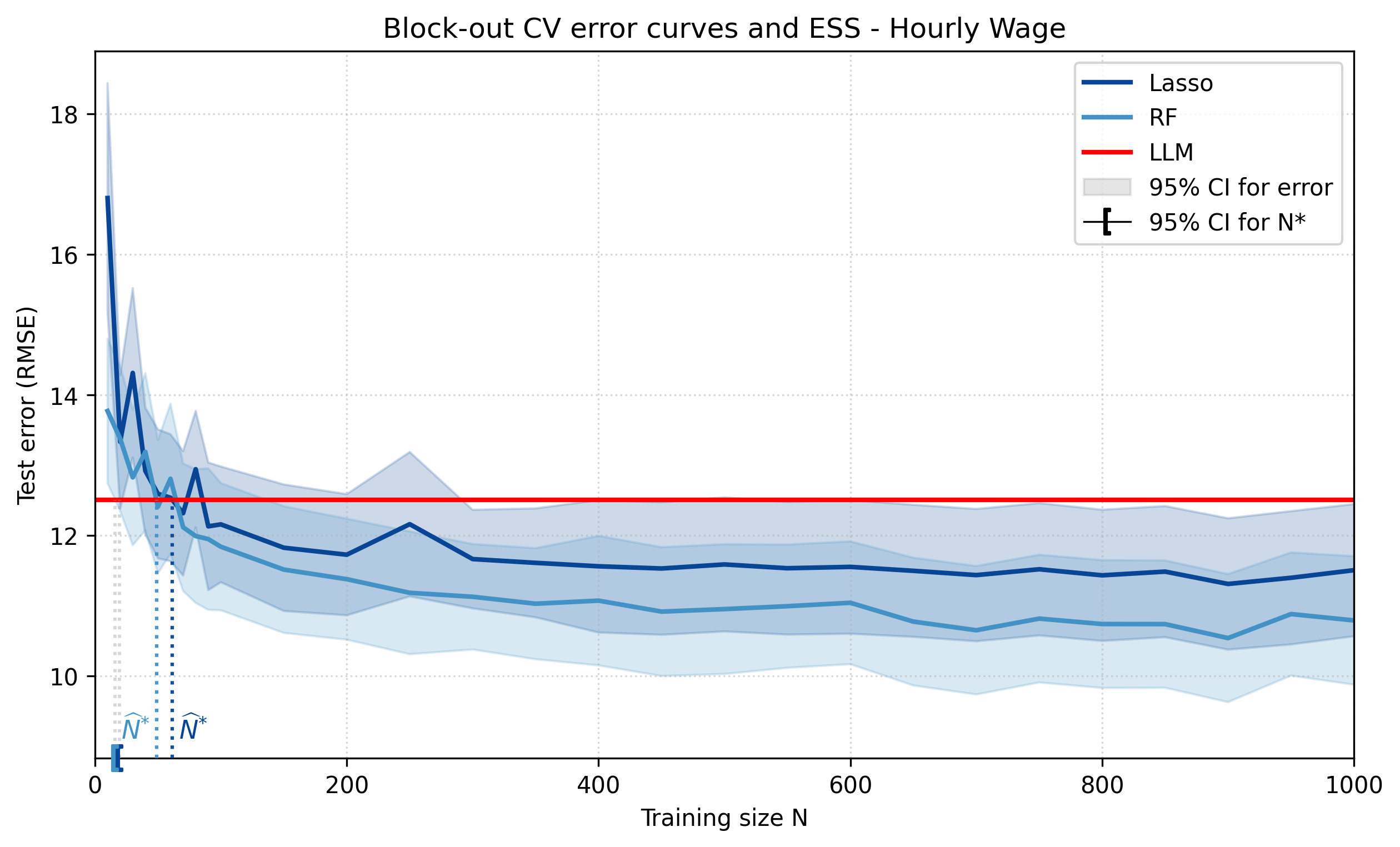}
\par\end{centering}
\caption{RMSE across training sample sizes ($Y$: Hourly wage)}
\label{fig:learning_curve1}
\end{figure}

\begin{figure}
\noindent \begin{centering}
\includegraphics[scale=0.6]{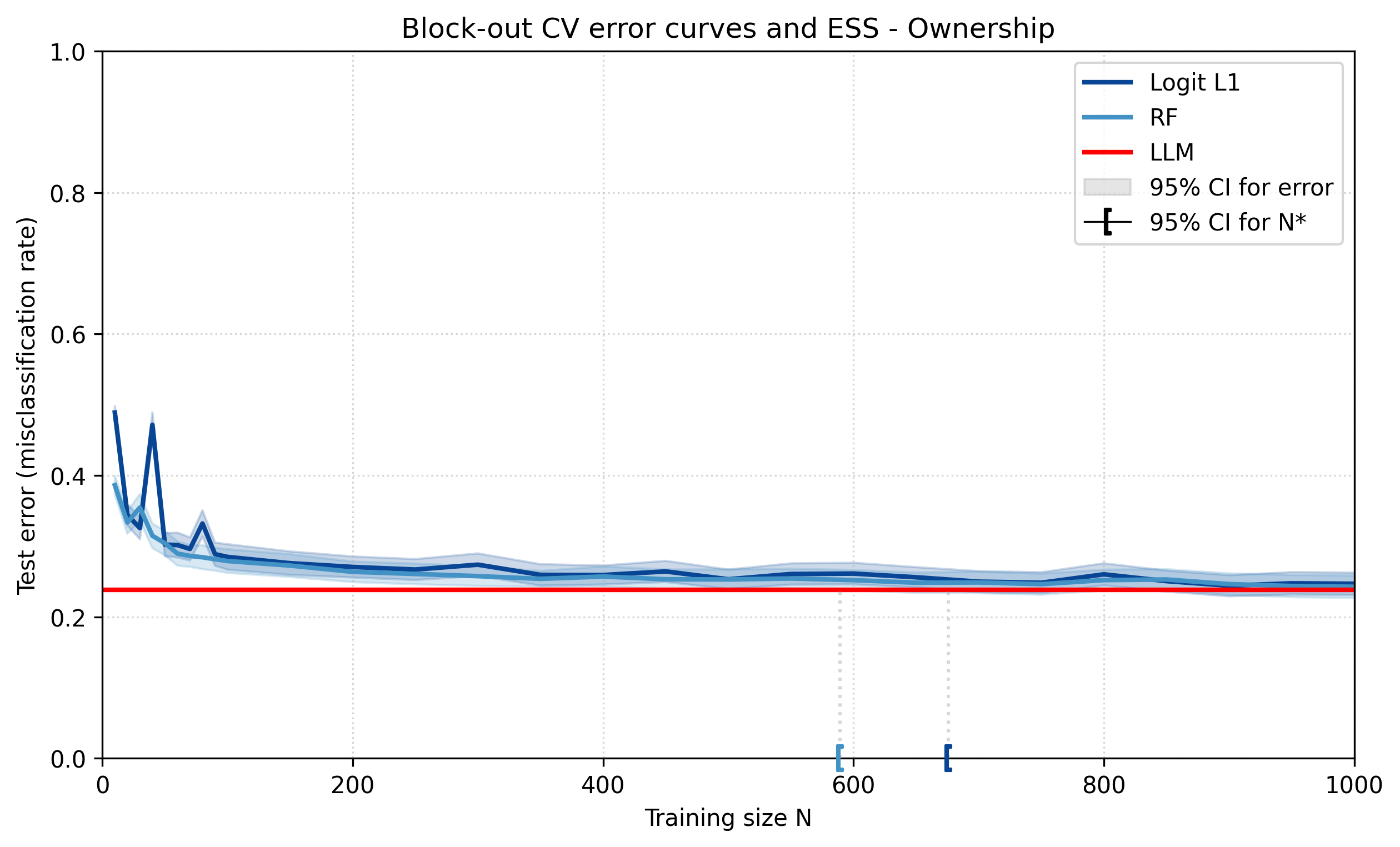}
\par\end{centering}
\caption{Misclassification rate across training sample sizes ($Y$:
Homeownership)}
\label{fig:learning_curve3}
\end{figure}

\begin{figure}
\noindent \begin{centering}
\includegraphics[scale=0.6]{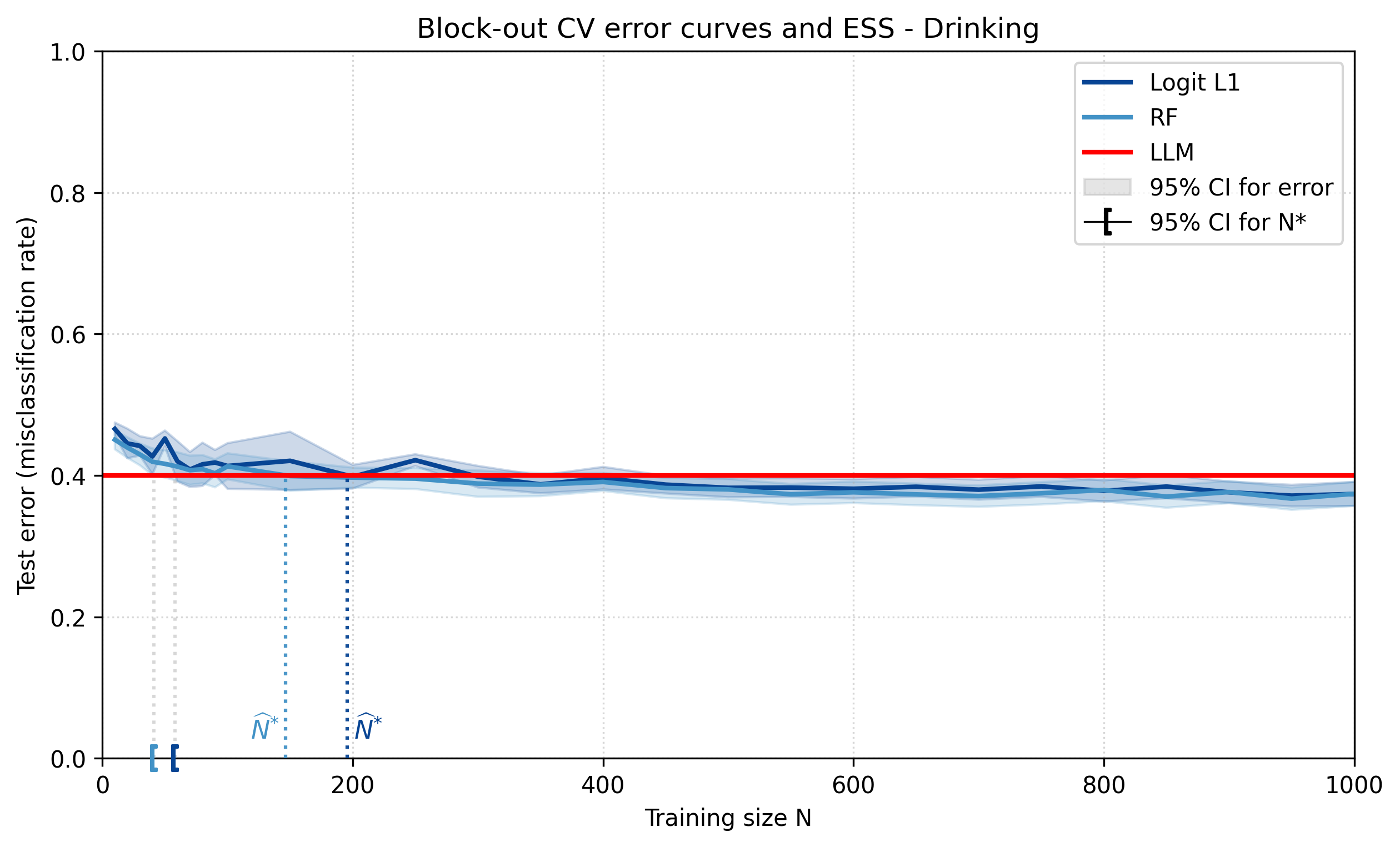}
\par\end{centering}
\caption{Misclassification rate across training sample sizes ($Y$:
Drinking)}
\label{fig:learning_curve2}
\end{figure}

\begin{figure}
\noindent \begin{centering}
\includegraphics[scale=0.6]{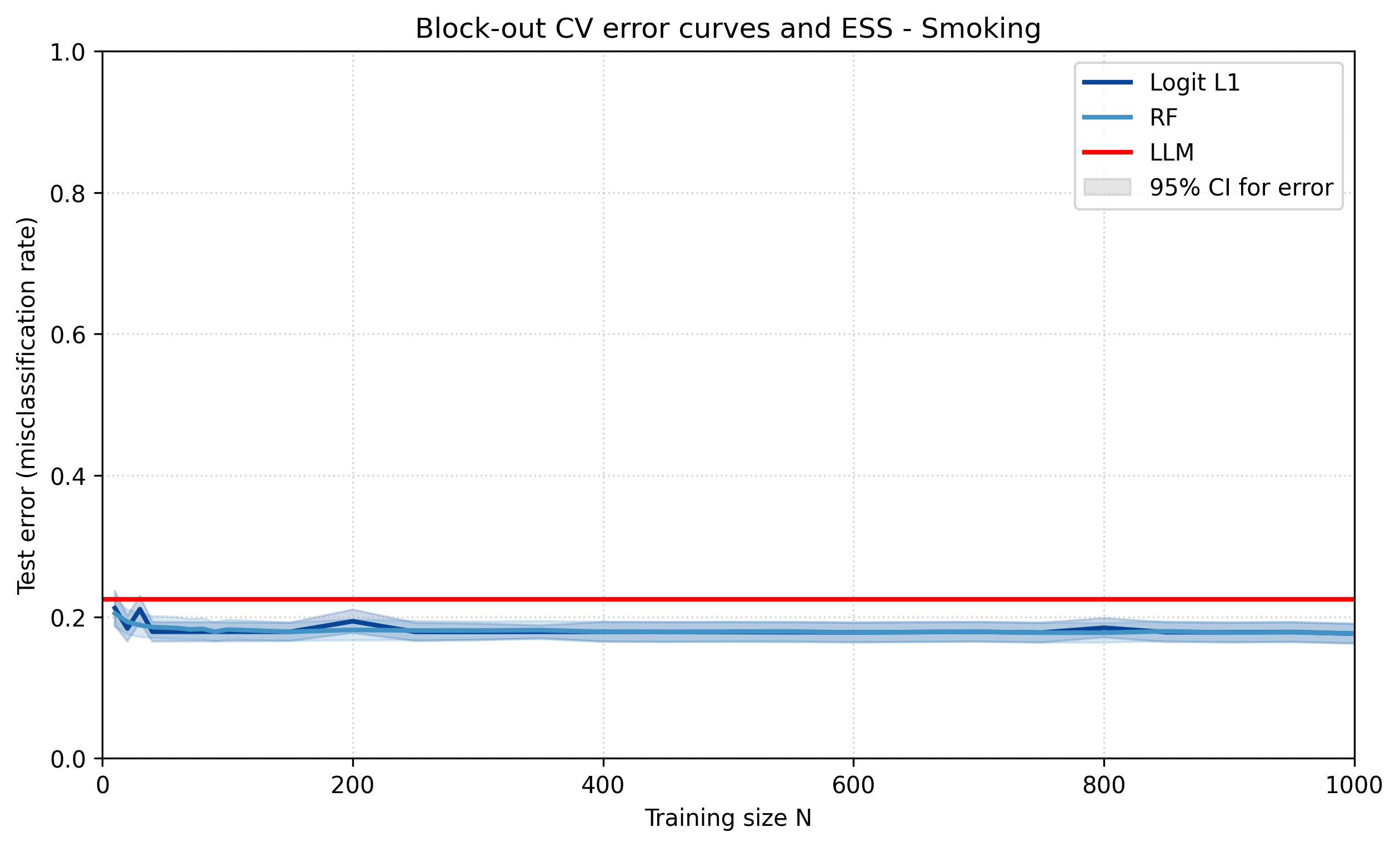}
\par\end{centering}
\caption{Misclassification rate across training sample sizes ($Y$:
Smoking)}
\label{fig:learning_curve4}
\end{figure}

\section{Extension: ESS for Treatment Effects}\label{sec:CATE}

LLMs are considered  useful tools for causal inference \citep{kiciman2023causal,wang2024twin,han2025mining, liu2025large}. In light of this line of work, we extend our evaluation framework for LLMs that are employed for treatment effect estimation. 

Let $\mu(X)\equiv \mathbb{E}_{P}[Y\mid X]$. Under squared loss $\ell(y,\hat y)=(y-\hat y)^2$, the population risk admits the standard bias--variance-type decomposition
\begin{align*}
e(f)\equiv \mathbb{E}_{P}\!\left[(Y-f(X))^{2}\right]
&=\mathbb{E}_{P}\!\left[(Y-\mu(X))^{2}\right]+\mathbb{E}_{P}\!\left[(\mu(X)-f(X))^{2}\right]\\
&\equiv e_{\text{irreducible}}+e_{\mu}(f),
\end{align*}
where the equality holds as the cross term is zero by the law of iterated expectations and the irreducible error is achieved as $\mu(X)$ uniquely minimizes $\mathbb{E}_{P}[(Y-f(X))^{2}]$. Therefore, any ESS defined with respect to $e(f)$ would be equivalent to an ESS defined with respect to $e_{\mu}(f)$.

Now consider the estimation of the conditional average treatment effect (CATE) with data $(X,T,Y)$ distributed as $P$, where $T$ is an indicator for an additional treatment. Assume unconfoundedness: $T\perp (Y_{0},Y_{1})\mid X$ where $Y_t$ is the counterfactual outcome for $t\in\{0,1\}$. Then,
the conditional mean of $Y_t$ is identified as $\mu_{t}(X)\equiv \mathbb{E}_{P}[Y_{t}\mid X]=\mathbb{E}_{P}[Y\mid X,T=t]$. Within treatment arm $d$ and under squared loss $\ell(y,\hat y)=(y-\hat y)^2$, the conditional risk admits the standard bias--variance-type decomposition:
\begin{align*}
e_t(f) &\equiv \mathbb{E}_{P}\!\left[(Y_t-f(X))^{2}\right]\\
&=\mathbb{E}_{P}\!\left[(Y_t-\mu_{t}(X))^{2}\right]+\mathbb{E}_{P}\!\left[(\mu_{t}(X)-f(X))^{2}\right]\\
&\equiv e_{\text{irreducible},t}+e_{\mu_t}(f),
\end{align*}
by an argument analogous to above. Therefore, any ESS defined with respect to $e_t(f)$ would be equivalently an ESS defined with respect to $e_{\mu_t}(f)$. Moreover, $e_t(f)$ can be computed from observed data for each $T=t$, as $\mathbb{E}_{P}\!\left[(Y_t-f(X))^{2}\right]  =\mathbb{E}_{P} \left[\mathbb{E}_{P}\!\left[(Y-f(X))^{2}\mid X, T=t\right]\right]$ by unconfoundedness.

For the LLM, we can estimate $e_t(f_{\text{LLM}})$ by applying the prompt strategy in Section \ref{sec:LLMPredict} to the test sample restricted to $T=t$ (i.e., to the treated or control test sample).\footnote{This approach relates to counterfactual generation using LLMs in clinical trials \citep{wang2024twin}.} For ML trained on each sample of size $N$, define
\[
e_{N,t}^{a}\equiv \mathbb{E}_{D_{N}}\, e_t(f_{N})=\mathbb{E}_{D_{N}}\, e_t(a(D_{N})),
\qquad D_{N}\equiv\{(x_{i},t_{i},y_{i})\}_{i=1}^{N},
\]
where $e_{N,t}^{a}$ can be evaluated via block-out cross-validation within the $T=t$ subsample. Then, for $t\in\{0,1\}$ the treatment-specific ESS is defined as
\[
N_{t}^{*}\equiv \min\{N:\; e_{N,t}^{a}\le e_t(f_{\mathrm{LLM}})\}.
\]

Define the CATE $\tau(X)\equiv \mathbb{E}_{P}[Y_{1}-Y_{0}\mid X]$ and the CATE risk
\[
e_{\tau}(g)\equiv \mathbb{E}_{P}[(\tau(X)-g(X))^{2}].
\]
For both the LLM and ML, using prediction rules $(f_{1},f_{0})$ corresponding to $(e_{\mu_1}(f_{1}),e_{\mu_0}(f_{0}))$ only yields the upper bound on $e_{\tau}(f_{1}-f_{0})$ due to the nonlinearity of the loss function, and thus defining ESS for $\tau$ through $f_{1}-f_{0}$ is elusive.\footnote{Specifically, we only have $e_{\tau}(f_{1}-f_{0})
=\mathbb{E}_{P}\!\left[(\tau(X)-(f_{1}(X)-f_{0}(X)))^{2}\right]
\le 2\{e_{\mu_1}(f_{1})+e_{\mu_0}(f_{0})\}$.} We thus proceed as follows. Instead of prompting the LLM to predict $Y_t$, we prompt it to directly ``predict'' $\tau(X)$. An example prompt can be as follows:
\footnotesize 
\begin{myprompt}{2}\label{P_ex_CATE}\texttt{Here is your persona: You are a 37-year-old male living in Tennessee in 2020. You work in architecture and engineering occupations in manufacturing industry. You have 4 years of work experience in the current job. You have completed 14 years of education. You are married and have 1 child. Your father has 12 years of education. Your mother has 12 years of education. Your race is white. Your health is excellent. You are Protestant. 
\\
\\
Finally, you are a subject in a randomized controlled trial in which the treatment is the provision of health insurance, specifically Medicaid.
\\
\\
Given your persona, what would be the effect of the randomized provision of health insurance on the number of your primary care visits?}\end{myprompt}

\normalsize

Let $g_{\text{LLM}}(x)$ denote such a prediction. For the benchmark algorithm, let $\tilde a:\mathcal{D}\to\mathcal{G}$ map data to a CATE function $g\in \mathcal{G}$ and define
\[
e_{N,\tau}^{\tilde a}\equiv \mathbb{E}_{D_{N}}\, e_{\tau}(\tilde a(D_{N})).
\]
Then define the CATE-ESS as
\[
N_{\tau}^{*}\equiv \min\{N:\; e_{N,\tau}^{\tilde a}\le e_{\tau}(g_{\mathrm{LLM}})\}.
\]
In practice, because $\tau$ is unknown, we proceed via the transformed outcome. Under unconfoundedness, it can be shown that\footnote{Observe that \begin{align*}
E[\tilde{Y} | X] &= P(T=1|X) E[\tilde{Y} | X, T=1] + P(T=0|X) E[\tilde{Y} | X, T=0] \\
&= \pi(X) \left[ \frac{E[Y_1|X]}{\pi(X)} \right] + (1 - \pi(X)) \left[ \frac{-E[Y_0|X]}{1 - \pi(X)} \right]= E[Y_1|X] - E[Y_0|X],
\end{align*}where unconfoundedness is applied in the second equality.}
\[
\mathbb{E}_{P}[\tilde{Y}\mid X]=\tau(X),
\]
where
\[
\tilde{Y}\equiv Y\frac{T-\pi(X)}{\pi(X)(1-\pi(X))},\qquad \pi(X)\equiv \mathbb{P}[T=1\mid X].
\]
Thus,
\begin{align*}
    e_{\tau}(g)=\mathbb{E}_{P}[(\tau(X)-g(X))^{2}]
&=\mathbb{E}_{P}[(\tilde{Y}-g(X))^{2}]-\mathbb{E}_{P}[(\tilde{Y}-\tau(X))^{2}]\\
&=\tilde e(g)-\tilde e_{\text{irreducible}},    
\end{align*}
where $\tilde e_{\text{irreducible}}$ does not depend on $g$. Consequently the ESS does not change if we replace $e_{\tau}$ by $\tilde e$:
\[
N_{\tau}^{*}=\tilde N^{*}\equiv \min\{N:\; \tilde e_{N}^{\tilde a}\le \tilde e(g_{\mathrm{LLM}})\},
\qquad \tilde e_{N}^{\tilde a}\equiv \mathbb{E}_{D_{N}}\, \tilde e(\tilde a(D_{N})).
\]
In practice, $\tilde e(g_{N})$ can be estimated by computing $\tilde Y$ (e.g., with known $\pi(X)$ in a randomized control trial), and the ML model can simply predict $\tilde Y$ based on $X$.

\section{Conclusions}

LLMs have become popular prediction tools in empirical social science. We
measure the value of LLMs' pretrained knowledge using the \emph{equivalent
sample size} (ESS): the training sample size required for an algorithm trained on domain-specific data
to match the LLM benchmark performance. We propose an inference framework for
ESS based on sequential hypothesis tests and a new asymptotic theory for
cross-validation error under block-out CV. The framework can help researchers
quantify and communicate the capability (or limitation) of LLMs in
domain-specific prediction tasks.

\begin{appendix}

\section{Block-Out CV under Fixed-$N$ Asymptotics}
\label{app:block_CV}
This appendix contains proofs of the main asymptotic results stated in Section~\ref{subsec:bCV}. Throughout, we fix the training size $N_k$ and let $n \to \infty$. For notational simplicity, we suppress the subscript $k$ and write $N$ for $N_k$, $B$ for $B_k$, $S_b$ for $S_{k,b}$, etc.

Recall the following definitions. The block-out CV estimator is
\[
\widehat{e}^{CV}_N = \frac{1}{B} \sum_{b=1}^{B} \widehat{e}_b, \qquad \widehat{e}_b = \frac{1}{M_n} \sum_{i \notin S_b} \ell\bigl(f_N^{(b)}(X_i), Y_i\bigr),
\]
where $M_n = n - N$ and $f_N^{(b)} = a(\mathcal{D}^{(b)}_{\mathrm{train}})$. For each block $b$, define the block-wise error
\[
e_{b,\mathrm{train}} \equiv \mathbb{E}\bigl[\ell(f_N^{(b)}(X), Y) \mid \mathcal{D}^{(b)}_{\mathrm{train}}\bigr].
\]
By construction, $(e_{b,\mathrm{train}})_{b=1}^B$ are i.i.d.\ with $\mathbb{E}[e_{b,\mathrm{train}}] = e^a_N$ and $\mathrm{Var}(e_{b,\mathrm{train}}) = V_{N,\mathrm{train}}$.

~

\begin{proof}[\textbf{Proof of Theorem \ref{thm:CLT-bCV}}]
Decompose the CV estimator as
\[
\widehat{e}^{\mathrm{CV}} = A_n + B_n, \qquad A_n \equiv \frac{1}{B} \sum_{b=1}^{B} e_{b,\mathrm{train}}, \quad B_n \equiv \frac{1}{B} \sum_{b=1}^{B} (\widehat{e}_b - e_{b,\mathrm{train}}).
\]

\medskip
First, we consider the training contribution $A_n$. The training blocks $\mathcal{D}^{(b)}_{\mathrm{train}} = (Z_{(b-1)N+1}, \ldots, Z_{bN})$ for $b = 1, \ldots, B$ are i.i.d, and thus $e_{b,\mathrm{train}}$ are i.i.d.\ with
\[
\mathbb{E}[e_{b,\mathrm{train}}] = \mathbb{E}_{D_N}[e(f_{N})] = e^a_N, \qquad \mathrm{Var}(e_{b,\mathrm{train}}) = V_{N,\mathrm{train}}.
\]
By the classical CLT,
\[
\sqrt{B}(A_n - e^a_N) \xrightarrow{d} \mathcal{N}(0, V_{N,\mathrm{train}}).
\]
Since $B = n/N$ under our assumption that $n = BN$, we have
\[
\sqrt{n}(A_n - e^a_N) = \sqrt{N} \cdot \sqrt{B}(A_n - e^a_N) \xrightarrow{d} \mathcal{N}(0, N V_{N,\mathrm{train}}).
\]

\medskip
Second, we consider the testing contribution $B_n$. Consider the centered loss at $z \equiv (x, y)$,
\[
\psi(D_N, z) \equiv \ell(f_{D_N}(x), y) - e(f_{N}), 
\]
Then for each block $b$ and each $i \notin S_b$,
\[
\ell(f_N^{(b)}(X_i), Y_i) - e_{b,\mathrm{train}} = \psi(\mathcal{D}^{(b)}_{\mathrm{train}}, Z_i).
\]
Thus
\[
B_n = \frac{1}{B M_n} \sum_{b=1}^{B} \sum_{i \notin S_b} \psi(\mathcal{D}^{(b)}_{\mathrm{train}}, Z_i).
\]

For each pair $(b, i)$ with $i \notin S_b$, the training set $\mathcal{D}^{(b)}_{\mathrm{train}}$ and the test point $Z_i$ are independent. Therefore, the projection onto the test point is
\[
\mathbb{E}\bigl[\psi(\mathcal{D}^{(b)}_{\mathrm{train}}, Z_i) \mid Z_i\bigr] = e^a_N(Z_i) - e^a_N.
\]
In the meanwhile, the projection onto a training point $Z_j$ (where $j \in S_b$) involves taking the expectation over the test point $Z_i$ conditional on the training set, which equals zero by the construction of the centered kernel $\psi$: $\mathbb{E}[\psi(\mathcal{D}^{(b)}_{\mathrm{train}}, Z_i) \mid \mathcal{D}^{(b)}_{\mathrm{train}}] = 0$.

We define $T_n$ as the first-order H\'{a}jek projection of $B_n$:
\[
T_n \equiv \sum_{j=1}^n \mathbb{E}[B_n \mid Z_j].
\]
Since each observation $Z_j$ serves as a test point for exactly $B-1$ blocks (specifically, all blocks $b$ such that $j \notin S_b$), we obtain:
\[
T_n = \sum_{j=1}^n \frac{B-1}{B M_n} \bigl(e^a_N(Z_j) - e^a_N\bigr) = \frac{B-1}{B M_n} \sum_{i=1}^{n} \bigl(e^a_N(Z_i) - e^a_N\bigr).
\]
Define the remainder $\mathcal{R}_n \equiv B_n - T_n$. Clearly, 
\[
\frac{B-1}{B M_n} = \frac{1}{n} + O(n^{-2}),
\]
with the implicit constant depending only on $N$.

The remainder $\mathcal{R}_n$ is a higher-order term in the sense of U/V-statistics.
Note that the term $B_n$ constitutes an incomplete V-statistic of order $N+1$ with kernel 
\[
h_N(z_1, \ldots, z_N, z_{N+1}) \equiv \psi\bigl((z_1, \ldots, z_N), z_{N+1}\bigr).
\]
and $T_n$ is the first-order H\'{a}jek projection of $B_n$. The remainder $\mathcal{R}_n \equiv B_n - T_n$ represents the degenerate component of the statistic.
By construction of the projection, $\mathcal{R}_n$ is uncorrelated with any linear function of the observations, and $\mathbb{E}[\mathcal{R}_n] = 0$.
To bound its variance, we observe that $\mathcal{R}_n$ can be expressed as a sum of degenerate kernels $\eta(\cdot)$ over the block structure:
\[
\mathcal{R}_n = \frac{1}{B M_n} \sum_{b=1}^{B} \sum_{i \notin S_b} \eta(\mathcal{D}^{(b)}_{\mathrm{train}}, Z_i),
\]
where $\eta$ satisfies the degeneracy properties $\mathbb{E}[\eta \mid Z_i] = 0$ (orthogonal to test point) and $\mathbb{E}[\eta \mid \mathcal{D}^{(b)}_{\mathrm{train}}] = 0$ (orthogonal to training set).
Consequently, in the expansion of the second moment $\mathbb{E}[\mathcal{R}_n^2]$, cross-terms $\mathbb{E}[\eta(\mathcal{D}^{(b)}_{\mathrm{train}}, Z_i)\eta(\mathcal{D}^{(b')}_{\mathrm{train}}, Z_j)]$ vanish if the two terms share the same block ($b=b'$ but $i \neq j$) due to the second degeneracy property. Cross-terms with disjoint blocks ($b \neq b'$) also vanish unless the indices overlap in a specific way. The only non-vanishing terms arise from configurations where training and test indices overlap across different blocks. Given the disjoint block design, the number of such non-vanishing terms scales as $O(n^2)$, while the normalization factor is $(B M_n)^{-2} = O(n^{-2})$.
This implies the variance scaling:
\[
\mathbb{E}[\mathcal{R}_n^2] = O(n^{-2}).
\]
By Chebyshev's inequality, for any $\epsilon > 0$, $\mathbb{P}(|\sqrt{n}\mathcal{R}_n| > \epsilon) \leq \frac{n \mathbb{E}[\mathcal{R}_n^2]}{\epsilon^2} \to 0$, ensuring that $\sqrt{n}\mathcal{R}_n \xrightarrow{p} 0$.
\medskip
Third, we consider the covariance between $A_n$ and $B_n$. Since $\sqrt{n} \mathcal{R}_n \xrightarrow{p} 0$ and $\sqrt{n}(A_n - e^a_N) = O_p(1)$, we have
\[
\mathrm{Cov}\bigl(\sqrt{n}(A_n - e^a_N), \sqrt{n} \mathcal{R}_n\bigr) \to 0.
\]
Hence, it suffices to study the covariance between $A_n$ and $T_n$.
Using the definition of $T_n$,
\[
\mathrm{Cov}(A_n, T_n) = \frac{B-1}{B^2 M_n} \sum_{b=1}^{B} \sum_{i=1}^{n} \mathrm{Cov}\bigl(e_{b,\mathrm{train}}, e^a_N(Z_i)\bigr).
\]
If $i \notin S_b$, then $e_{b,\mathrm{train}}$ depends only on $\{Z_j : j \in S_b\}$ whereas $e^a_N(Z_i)$ depends only on $Z_i$ and an independent training sample, so $e_{b,\mathrm{train}}$ and $e^a_N(Z_i)$ are independent and their covariance is 0. If $i \in S_b$, then the pair $(e_{b,\mathrm{train}}, e^a_N(Z_i))$ has the same distribution as $(e(f_{N}), e^a_N(Z_1))$ with $D_N = (Z_1, \ldots, Z_N) \sim P^N$.
Hence, 
\[
\mathrm{Cov}\bigl(e_{b,\mathrm{train}}, e^a_N(Z_i)\bigr) = C_N \quad \text{for all } b \text{ and } i \in S_b.
\]
Observing that there are exactly $BN = n$ pairs $(b, i)$ with $i \in S_b$, we deduce
\[
\mathrm{Cov}(A_n, T_n) = \frac{B-1}{B^2 M_n} \cdot BN \cdot C_N = \frac{B-1}{B} \cdot \frac{N}{M_n} \cdot C_N = \frac{N}{n} C_N.
\]
and consequently
\[
\mathrm{Cov}\bigl(\sqrt{n}(A_n - e^a_N), \sqrt{n} T_n\bigr) = n \, \mathrm{Cov}(A_n, T_n) \to N C_N.
\]

\medskip
Third, we consider the covariance between $A_n$ and $B_n$. Since $\sqrt{n} \mathcal{R}_n \xrightarrow{p} 0$ and $\sqrt{n}(A_n - e^a_N) = O_p(1)$, we have
\[
\mathrm{Cov}\bigl(\sqrt{n}(A_n - e^a_N), \sqrt{n} \mathcal{R}_n\bigr) \to 0.
\]
Hence, it suffices to study the covariance between $A_n$ and $T_n$. Using the definition of $T_n$,
\[
\mathrm{Cov}(A_n, T_n) = \frac{B-1}{B^2 M_n} \sum_{b=1}^{B} \sum_{i=1}^{n} \mathrm{Cov}\bigl(e_{b,\mathrm{train}}, e^a_N(Z_i)\bigr).
\]
If $i \notin S_b$, then $e_{b,\mathrm{train}}$ depends only on $\{Z_j : j \in S_b\}$ whereas $e^a_N(Z_i)$ depends only on $Z_i$ and an independent training sample, so $e_{b,\mathrm{train}}$ and $e^a_N(Z_i)$ are independent and their covariance is 0. If $i \in S_b$, then the pair $(e_{b,\mathrm{train}}, e^a_N(Z_i))$ has the same distribution as $(e(f_{N}), e^a_N(Z_1))$ with $D_N = (Z_1, \ldots, Z_N) \sim P^N$. Hence, 
\[
\mathrm{Cov}\bigl(e_{b,\mathrm{train}}, e^a_N(Z_i)\bigr) = C_N \quad \text{for all } b \text{ and } i \in S_b.
\]
Observing that there are exactly $BN = n$ pairs $(b, i)$ with $i \in S_b$, we deduce
\[
\mathrm{Cov}(A_n, T_n) = \frac{B-1}{B^2 M_n} \cdot BN \cdot C_N = \frac{B-1}{B} \cdot \frac{N}{M_n} \cdot C_N = \frac{N}{n} C_N + O(n^{-2}).
\]
and consequently
\[
\mathrm{Cov}\bigl(\sqrt{n}(A_n - e^a_N), \sqrt{n} T_n\bigr) = n \, \mathrm{Cov}(A_n, T_n) \to N C_N.
\]

\medskip
Finally, combining the analysis above, we deduce
\[
\sqrt{n}(A_n - e^a_N) \xrightarrow{d} \mathcal{N}(0, N V_{N,\mathrm{train}}), \qquad \sqrt{n} B_n \xrightarrow{d} \mathcal{N}(0, V_{N,\mathrm{test}}),
\]
and the joint limit is bivariate normal with covariance $\lim_n \mathrm{Cov}(\sqrt{n}(A_n - e^a_N), \sqrt{n} B_n) = N C_N$. Thus
\[
\sqrt{n}(\widehat{e}^{\mathrm{CV}} - e^a_N) = \sqrt{n}(A_n - e^a_N) + \sqrt{n} B_n \xrightarrow{d} \mathcal{N}(0, \sigma_N^2),
\]
with 
$\sigma_N^2 = N V_{N,\mathrm{train}} + V_{N,\mathrm{test}} + 2 N C_N.$
\end{proof}

The proof of Theorem~\ref{thm:CLT_varest} relies on showing that each component of the plug-in variance estimator is consistent. We establish this through three lemmas.

\begin{lemma}
\label{lem:train-var-est}
Under Assumptions \ref{assu:sampling},  \ref{assu:moments}, and  $n = BN$,
\[
\widehat{V}_{N,\mathrm{train}} \xrightarrow{p} V_{N,\mathrm{train}}.
\]
\end{lemma}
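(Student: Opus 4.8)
The plan is to decompose each block-wise CV risk as $\widehat{e}_b = e_{b,\mathrm{train}} + \xi_b$, where $\xi_b \equiv \widehat{e}_b - e_{b,\mathrm{train}} = M_n^{-1}\sum_{i \notin S_b} \psi(\mathcal{D}^{(b)}_{\mathrm{train}}, Z_i)$ is the mean-zero fluctuation built from the centered loss kernel $\psi$ introduced in the proof of Theorem~\ref{thm:CLT-bCV}. Substituting this into $\widehat{V}_{N,\mathrm{train}} = (B-1)^{-1}\sum_b (\widehat{e}_b - \widehat{e}^{CV}_N)^2$ and expanding the square produces three pieces: the sample variance of $(e_{b,\mathrm{train}})_b$, the sample variance of $(\xi_b)_b$, and a cross term. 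I will show the first converges to $V_{N,\mathrm{train}}$ and the other two vanish in probability.

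For the first piece, the proof of Theorem~\ref{thm:CLT-bCV} already records that $(e_{b,\mathrm{train}})_{b=1}^B$ are i.i.d.\ with mean $e^a_N$ and variance $V_{N,\mathrm{train}}$. By Jensen's inequality and Assumption~\ref{assu:moments}, $\mathbb{E}[|e_{b,\mathrm{train}}|^{2+\delta}] \le \mathbb{E}[|\ell(f_N(X),Y)|^{2+\delta}] < \infty$, so $e_{b,\mathrm{train}}$ has a finite second moment; applying the weak law of large numbers to $e_{b,\mathrm{train}}$ and to $e_{b,\mathrm{train}}^2$ and using $B = n/N \to \infty$ gives $(B-1)^{-1}\sum_b (e_{b,\mathrm{train}} - \overline{e_{\mathrm{train}}})^2 \xrightarrow{p} V_{N,\mathrm{train}}$.

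For the $\xi_b$ piece, I condition on $\mathcal{D}^{(b)}_{\mathrm{train}}$: the summands $\psi(\mathcal{D}^{(b)}_{\mathrm{train}}, Z_i)$, $i \notin S_b$, are conditionally i.i.d.\ with mean zero (since $\mathbb{E}[\psi(D_N,Z)\mid D_N] = 0$), so $\mathbb{E}[\xi_b^2 \mid \mathcal{D}^{(b)}_{\mathrm{train}}] = M_n^{-1}\mathrm{Var}\bigl(\ell(f^{(b)}_N(X),Y)\mid \mathcal{D}^{(b)}_{\mathrm{train}}\bigr) \le M_n^{-1}\,\mathbb{E}\bigl[\ell(f^{(b)}_N(X),Y)^2 \mid \mathcal{D}^{(b)}_{\mathrm{train}}\bigr]$. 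Taking expectations yields $\mathbb{E}[\xi_b^2] \le M_n^{-1}\,\mathbb{E}[\ell(f_N(X),Y)^2] = O(n^{-1})$ uniformly in $b$, hence $\mathbb{E}\bigl[(B-1)^{-1}\sum_b \xi_b^2\bigr] = \tfrac{B}{B-1}\,O(n^{-1}) \to 0$, and Markov's inequality gives $(B-1)^{-1}\sum_b \xi_b^2 \xrightarrow{p} 0$; a fortiori $(B-1)^{-1}\sum_b (\xi_b - \bar\xi)^2 \xrightarrow{p} 0$. The cross term is then handled by Cauchy--Schwarz, which bounds it by $2\bigl[(B-1)^{-1}\sum_b (e_{b,\mathrm{train}} - \overline{e_{\mathrm{train}}})^2\bigr]^{1/2}\bigl[(B-1)^{-1}\sum_b (\xi_b - \bar\xi)^2\bigr]^{1/2} = O_p(1)\cdot o_p(1)$, and combining the three pieces gives the claim.

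I expect the only subtle point to be that $\sum_b \xi_b^2$ is itself merely $O_p(1)$, not negligible: each $\xi_b$ is of order $n^{-1/2}$ but there are $B \asymp n$ blocks, so the argument genuinely relies on dividing by $B-1 \to \infty$. It is worth noting, however, that no independence of the $\xi_b$ across blocks is needed (their test sets overlap), since only a per-block second-moment bound and linearity of expectation enter the Markov step.
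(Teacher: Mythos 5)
Your proof is correct, and it starts from the same decomposition as the paper ($\widehat{e}_b = e_{b,\mathrm{train}} + \xi_b$, plus the LLN for the i.i.d.\ block risks), but it controls the fluctuation term by a genuinely different and more elementary route. The paper bounds $\mathbb{E}[|U_b|^{2+\delta}]$ via Rosenthal's inequality and then takes a union bound over the $B \asymp n$ blocks to obtain the uniform statement $\max_b |U_b| \xrightarrow{p} 0$, before appealing to continuity of the sample-variance functional; this is precisely where the $(2+\delta)$-th moment of Assumption \ref{assu:moments} is consumed, since with only second moments the union bound $B \cdot O(M_n^{-1})/\epsilon^2$ does not vanish. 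You instead expand the sample variance into three pieces and kill the $\xi$-piece in aggregate: $\mathbb{E}\bigl[(B-1)^{-1}\sum_b \xi_b^2\bigr] = O(M_n^{-1}) \to 0$ followed by Markov, with the cross term dispatched by Cauchy--Schwarz. This needs only finite second moments of the loss and no uniform control across blocks, and it also makes explicit the cancellation that the paper's appeal to ``continuity of $S_B^2$'' leaves implicit (the number of arguments of $S_B^2$ grows with $n$, so the final step really requires exactly the expansion you wrote out). The one thing you forgo is the uniform bound $\max_b|\xi_b| \xrightarrow{p} 0$, which the paper reuses in the proof of Lemma \ref{lem:cross-est}; but it is used there only to conclude $B^{-1}\sum_b(\Delta e_b)^2 \xrightarrow{p} 0$, which your aggregate second-moment computation already delivers directly, so your argument would substitute cleanly.
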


\begin{proof}[\textbf{Proof of Lemma \ref{lem:train-var-est}}]
Write $\widehat{e}_b = e_{b,\mathrm{train}} + U_b$ where
\[
U_b \equiv \frac{1}{M_n} \sum_{i \notin S_b} Y_i, \quad \text{with} \quad Y_i \equiv \ell(f_N^{(b)}(X_i), Y_i) - e_{b,\mathrm{train}}.
\]
Conditional on the training set $\mathcal{D}^{(b)}_{\mathrm{train}}$, the variables $\{Y_i\}_{i \notin S_b}$ are independent with mean zero. By Assumption~\ref{assu:moments}, they satisfy $\mathbb{E}[|Y_i|^{2+\delta}] < \infty$ for some $\delta > 0$.

We recall Rosenthal's inequality \citep{rosenthal1970subspaces}: Let $X_1, \ldots, X_m$ be independent zero-mean random variables with finite $p$-th moments for $p \ge 2$. Then there exists a constant $C_p$ depending only on $p$ such that
\[
\mathbb{E}\left[\left|\sum_{j=1}^m X_j\right|^p\right] \le C_p \left( \sum_{j=1}^m \mathbb{E}[|X_j|^p] + \left( \sum_{j=1}^m \mathbb{E}[X_j^2] \right)^{p/2} \right).
\]
Applying this to $U_b$ with $p = 2+\delta$ and $m = M_n$, and noting that $\sum \mathbb{E}[|Y_i|^{2+\delta}] = M_n \mathbb{E}[|Y_1|^{2+\delta}]$ and $\sum \mathbb{E}[Y_i^2] = M_n \mathrm{Var}(Y_1)$, we obtain:
\begin{align*}
\mathbb{E}\bigl[|U_b|^{2+\delta} \mid \mathcal{D}^{(b)}_{\mathrm{train}}\bigr] 
&= \frac{1}{M_n^{2+\delta}} \mathbb{E}\left[\left|\sum_{i \notin S_b} Y_i\right|^{2+\delta} \,\bigg|\, \mathcal{D}^{(b)}_{\mathrm{train}}\right] \\
&\le \frac{C_{2+\delta}}{M_n^{2+\delta}} \left( M_n \mathbb{E}[|Y_1|^{2+\delta}] + \bigl( M_n \mathrm{Var}(Y_1) \bigr)^{1+\delta/2} \right) \\
&= O\bigl(M_n^{-(1+\delta)} + M_n^{-(1+\delta/2)}\bigr) \\
&= O\bigl(n^{-(1+\delta/2)}\bigr).
\end{align*}
To bound the maximum error across blocks, we apply the union bound and Markov's inequality:
\[
\Pr\left(\max_{1 \le b \le B} |U_b| > \epsilon\right) \le \sum_{b=1}^B \frac{\mathbb{E}[|U_b|^{2+\delta}]}{\epsilon^{2+\delta}} \le B \cdot \frac{O(n^{-(1+\delta/2)})}{\epsilon^{2+\delta}}.
\]
Since $B = O(n)$, the probability is bounded by $O(n^{-\delta/2})$, which converges to zero as $n \to \infty$. Thus, $\max_b |U_b| \xrightarrow{p} 0$.

Finally, note that $(e_{b,\mathrm{train}})$ are i.i.d.\ with variance $V_{N,\mathrm{train}}$, so the sample variance $$S_B^2(e_{1,\mathrm{train}}, \ldots, e_{B,\mathrm{train}}) \xrightarrow{p} V_{N,\mathrm{train}}$$ 
by the standard LLN. Since the sample variance $S_B^2$ is continuous in $e_b$'s and $\max_b |\widehat{e}_b - e_{b,\mathrm{train}}| = \max_b |U_b| \xrightarrow{p} 0$, we have
\[
S_B^2(\widehat{e}_1, \ldots, \widehat{e}_B) - S_B^2(e_{1,\mathrm{train}}, \ldots, e_{B,\mathrm{train}}) \xrightarrow{p} 0
\]
and thus $S_B^2(\widehat{e}_1, \ldots, \widehat{e}_B) \xrightarrow{p} V_{N,\mathrm{train}}$.
\end{proof}

\begin{lemma}
\label{lem:test-var-est}
Under Assumptions \ref{assu:sampling}, \ref{assu:moments}, and $n = BN$,
\[
\widehat{V}_{N,\mathrm{test}} \xrightarrow{p} V_{N,\mathrm{test}}.
\]
\end{lemma}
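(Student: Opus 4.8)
The plan is to write each block-averaged test loss as its conditional mean plus mean-zero noise, show that the sample variance of the conditional means already converges to the target, and argue that the noise contributes negligibly in aggregate. Concretely, I would set $m_i \equiv e^a_N(Z_i)$ and $\xi_i \equiv \widehat{\mu}_i - m_i$. Because the $Z_i$ are i.i.d.\ and $\mathbb{E}[e^a_N(Z)^2] < \infty$ under Assumption~\ref{assu:moments}, the ordinary law of large numbers for sample variances gives $\frac{1}{n-1}\sum_{i=1}^{n}(m_i - \bar m)^2 \xrightarrow{p} \mathrm{Var}(e^a_N(Z)) = V_{N,\mathrm{test}}$, where $\bar m \equiv n^{-1}\sum_i m_i$, so the task reduces to showing that replacing $m_i$ by $\widehat{\mu}_i$ perturbs the sample variance by $o_p(1)$.

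The core step is a first-moment bound on $\frac{1}{n}\sum_{i=1}^n \xi_i^2$. Fixing $i$ and conditioning on $Z_i$, I would note that the $B-1$ losses $\ell(f_N^{(b)}(X_i), Y_i)$ averaged in $\widehat{\mu}_i$ are built from the $B-1$ training blocks not containing $i$; since $n = BN$ these blocks use disjoint indices, all distinct from $i$, so conditionally on $Z_i$ they are independent samples of size $N$ from $P$. Hence the summands are conditionally i.i.d.\ with conditional mean $e^a_N(Z_i)$ (by the definition of $e^a_N(\cdot)$) and conditional variance $\tau^2(Z_i) \equiv \mathrm{Var}(\ell(f_N(X_i),Y_i)\mid Z_i)$, giving $\mathbb{E}[\xi_i^2 \mid Z_i] = \tau^2(Z_i)/(B-1)$. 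Bounding $\tau^2(z) \le \mathbb{E}_{D_N}[\ell(f_{D_N}(x),y)^2]$ and iterating expectations, Assumption~\ref{assu:moments} yields $\mathbb{E}[\tau^2(Z)] \le \mathbb{E}[\ell(f_N(X),Y)^2] < \infty$, so $\mathbb{E}\big[\frac{1}{n}\sum_i \xi_i^2\big] = \mathbb{E}[\tau^2(Z)]/(B-1) = O(1/B) \to 0$; since the average is nonnegative, Markov's inequality delivers $\frac{1}{n}\sum_i \xi_i^2 \xrightarrow{p} 0$.

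To assemble the result I would expand $\widehat{\mu}_i - \bar{\mu} = (m_i - \bar m) + (\xi_i - \bar\xi)$ with $\bar\xi \equiv n^{-1}\sum_i \xi_i$, and write
\[
\widehat{V}_{N,\mathrm{test}} = \frac{1}{n-1}\sum_i (m_i - \bar m)^2 + \frac{1}{n-1}\sum_i (\xi_i - \bar\xi)^2 + \frac{2}{n-1}\sum_i (m_i - \bar m)(\xi_i - \bar\xi).
\]
The first term converges to $V_{N,\mathrm{test}}$; the second is at most $\frac{n}{n-1}\cdot\frac{1}{n}\sum_i\xi_i^2 = o_p(1)$ (dropping the $-\bar\xi$ only increases the bound); and the cross term is $o_p(1)$ by Cauchy--Schwarz, being bounded by twice the product of the square roots of the other two terms, i.e.\ $O_p(1)\cdot o_p(1)$. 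Combining gives the claim.

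The part that looks delicate is the dependence across $i$: all of the $\widehat{\mu}_i$ are functions of the same $B$ fitted predictors $f_N^{(1)},\dots,f_N^{(B)}$, so the $\xi_i$ are far from independent. The point of routing everything through $\frac{1}{n}\sum_i\xi_i^2 \ge 0$ is that only its first moment is needed, and that first moment factorizes cleanly after conditioning on each $Z_i$ separately, so the joint dependence never has to be analyzed. A heavier approach, e.g.\ trying to control $\max_i|\xi_i|$ as was done for the training-variance estimator in Lemma~\ref{lem:train-var-est}, would collide with this dependence and is unnecessary here.
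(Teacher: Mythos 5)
Your proposal is correct and follows essentially the same route as the paper's proof: both rest on the conditional i.i.d.\ structure of the $B-1$ out-of-block losses given $Z_i$, the resulting bound $\mathbb{E}\bigl[(\widehat{\mu}_i - e^a_N(Z_i))^2\bigr] = \mathbb{E}[v_N(Z)]/(B-1) \to 0$, and a first-moment/Markov argument that sidesteps the dependence of the $\widehat{\mu}_i$ across $i$. The only difference is cosmetic: you decompose the sample variance directly via $(m_i-\bar m)+(\xi_i-\bar\xi)$, while the paper works through the sample second moment and applies Cauchy--Schwarz to $\bigl|\widehat{\mu}_i^2 - e^a_N(Z_i)^2\bigr|$.
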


\begin{proof}[\textbf{Proof of Lemma \ref{lem:test-var-est}}]
Recall that $\widehat{\mu}_i = \frac{1}{B-1} \sum_{b : i \notin S_b} \ell(f_N^{(b)}(X_i), Y_i)$. Conditional on $Z_i$, the summands are independent and identically distributed with mean $e^a_N(Z_i)$. Let $v_N(z) \equiv \mathrm{Var}(\ell(f_N(x), y))$ denote the conditional variance of the loss given a test point $z=(x,y)$. The conditional variance of the average is:
\[
\mathrm{Var}(\widehat{\mu}_i \mid Z_i) = \frac{v_N(Z_i)}{B-1}.
\]
By the law of total expectation, the mean squared error is
\[
\mathbb{E}\bigl[(\widehat{\mu}_i - e^a_N(Z_i))^2\bigr] = \mathbb{E}\bigl[\mathrm{Var}(\widehat{\mu}_i \mid Z_i)\bigr] = \frac{\mathbb{E}[v_N(Z)]}{B-1}.
\]
Under Assumption~\ref{assu:moments}, $\mathbb{E}[v_N(Z)]$ is finite. Since $B \asymp n$ in the fixed-$N$ regime, the right-hand side converges to zero uniformly in $i$ as $n \to \infty$. Thus, $\widehat{\mu}_i$ converges to $e^a_N(Z_i)$ in $L^2$.

To establish the consistency of the sample second moment, we use the decomposition:
\[
\frac{1}{n} \sum_{i=1}^n \widehat{\mu}_i^2 = \frac{1}{n} \sum_{i=1}^n e^a_N(Z_i)^2 + \frac{1}{n} \sum_{i=1}^n \left( \widehat{\mu}_i^2 - e^a_N(Z_i)^2 \right).
\]
The first term is an average of i.i.d.\ variables with finite mean (by Assumption~\ref{assu:moments}). By the standard LLN, it converges in probability to $\mathbb{E}[e^a_N(Z)^2]$.

For the second term, we apply the Cauchy-Schwarz inequality to bound the $L^1$ norm of the difference:
\begin{align*}
\mathbb{E}\bigl| \widehat{\mu}_i^2 - e^a_N(Z_i)^2 \bigr| &= \mathbb{E}\bigl| (\widehat{\mu}_i - e^a_N(Z_i))(\widehat{\mu}_i + e^a_N(Z_i)) \bigr| \\
&\le \left\| \widehat{\mu}_i - e^a_N(Z_i) \right\|_2 \cdot \left\| \widehat{\mu}_i + e^a_N(Z_i) \right\|_2.
\end{align*}
We established above that $\| \widehat{\mu}_i - e^a_N(Z_i) \|_2 \to 0$. The second factor is bounded because $\| \widehat{\mu}_i \|_2 \to \| e^a_N(Z_i) \|_2 < \infty$. Thus, the expected absolute difference converges to zero. By Markov's inequality, the average of the differences converges to zero in probability.

Combining these results:
\[
\frac{1}{n} \sum_{i=1}^{n} \widehat{\mu}_i^2 \xrightarrow{p} \mathbb{E}[e^a_N(Z)^2], \qquad \frac{1}{n} \sum_{i=1}^{n} \widehat{\mu}_i \xrightarrow{p} \mathbb{E}[e^a_N(Z)] = e^a_N.
\]
Consequently,
\[
\widehat{V}_{N,\mathrm{test}} = \frac{n}{n-1} \left( \frac{1}{n} \sum_{i=1}^n \widehat{\mu}_i^2 - \bar{\mu}^2 \right) \xrightarrow{p} \mathbb{E}[e^a_N(Z)^2] - (e^a_N)^2 = V_{N,\mathrm{test}}.
\]
\end{proof}

\begin{lemma}
\label{lem:cross-est}
Under Assumptions~\ref{assu:sampling}, \ref{assu:moments}, and $n = BN$,
\[
\widehat{C}_N \xrightarrow{p} C_N.
\]
\end{lemma}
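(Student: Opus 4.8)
The plan is to show that $\widehat{C}_N$ equals, up to an asymptotically negligible perturbation, the sample covariance of an i.i.d.\ array of \emph{oracle} block summaries whose population covariance is exactly $C_N$. Define $\tilde{e}_b \equiv e_{b,\mathrm{train}} = e(f_N^{(b)})$ and $\tilde{m}_b \equiv N^{-1}\sum_{i\in S_b} e^a_N(Z_i)$, with block averages $\bar{\tilde e}, \bar{\tilde m}$. Because the blocks $S_1,\dots,S_B$ are disjoint and the $Z_i$ are i.i.d., the pairs $(\tilde e_b,\tilde m_b)_{b=1}^B$ are i.i.d., with finite second moments by Assumption \ref{assu:moments} (it yields $\mathbb{E}[e(f_N)^2]<\infty$ and $\mathbb{E}[e^a_N(Z)^2]<\infty$, hence $\mathbb{E}[\tilde m_b^2]\le \mathbb{E}[e^a_N(Z)^2]<\infty$ by Jensen). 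Their population covariance is
\[
\mathrm{Cov}(\tilde e_1,\tilde m_1) = \frac{1}{N}\sum_{i\in S_1}\mathrm{Cov}\!\big(e(f_N^{(1)}),\,e^a_N(Z_i)\big) = \frac{1}{N}\cdot N\cdot C_N = C_N,
\]
where the middle equality uses that, by exchangeability of the $N$ i.i.d.\ observations within a block, $\big(e(f_N^{(1)}),e^a_N(Z_i)\big)$ has the same joint law as $\big(e(f_N),e^a_N(Z_1)\big)$ for each $i\in S_1$. The classical law of large numbers for sample covariances of i.i.d.\ pairs then gives $\frac{1}{B-1}\sum_b(\tilde e_b-\bar{\tilde e})(\tilde m_b-\bar{\tilde m})\xrightarrow{p}C_N$.

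Next I would control the estimation errors uniformly over blocks. Write $\widehat e_b = \tilde e_b + r_b$ and $\widehat m_b = \tilde m_b + s_b$, where $r_b \equiv \widehat e_b - e_{b,\mathrm{train}}$ satisfies $\max_b|r_b|\xrightarrow{p}0$ by the $\max_b|U_b|\xrightarrow{p}0$ step already proved inside Lemma \ref{lem:train-var-est}, and $s_b \equiv N^{-1}\sum_{i\in S_b}(\widehat\mu_i - e^a_N(Z_i))$ so that $\max_b|s_b| \le \max_{1\le i\le n}|\widehat\mu_i - e^a_N(Z_i)|$. To bound the latter, the key observation is that, conditional on $Z_i$, the $B-1$ summands $\ell(f_N^{(b)}(X_i),Y_i)$ in $\widehat\mu_i$ (over $b$ with $i\notin S_b$) are i.i.d.\ with mean $e^a_N(Z_i)$: the disjoint-block design makes the training blocks $\mathcal{D}^{(b)}_{\mathrm{train}}$ in this sum mutually independent of one another and of $Z_i$. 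Applying Rosenthal's inequality conditional on $Z_i$ with exponent $2+\delta$, then taking expectations and bounding the conditional-variance term by the conditional $(2+\delta)$-moment via Jensen (whose unconditional expectation is finite under Assumption \ref{assu:moments}), gives $\mathbb{E}|\widehat\mu_i - e^a_N(Z_i)|^{2+\delta} = O\big(B^{-(1+\delta/2)}\big) = O\big(n^{-(1+\delta/2)}\big)$. A union bound over the $n$ observations with Markov's inequality then yields $\Pr\big(\max_i|\widehat\mu_i - e^a_N(Z_i)|>\epsilon\big) = O(n^{-\delta/2})\to 0$, so $\max_b|s_b|\xrightarrow{p}0$.

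Finally I would assemble: expanding $\frac{1}{B-1}\sum_b(\widehat e_b-\bar{\widehat e})(\widehat m_b-\bar{\widehat m})$ about the oracle version produces the oracle sample covariance plus cross terms, each bounded by a uniform $o_p(1)$ factor ($\max_b|r_b|$ or $\max_b|s_b|$) times an average that is $O_p(1)$ by the law of large numbers ($\frac1B\sum_b|\tilde e_b|, \frac1B\sum_b|\tilde m_b| = O_p(1)$ by finite first moments), plus the term $\frac{1}{B-1}\sum_b(r_b-\bar r)(s_b-\bar s)$ which is at most $4\max_b|r_b|\max_b|s_b| = o_p(1)$. Hence $\widehat C_N = \frac{1}{B-1}\sum_b(\tilde e_b-\bar{\tilde e})(\tilde m_b-\bar{\tilde m}) + o_p(1) \xrightarrow{p} C_N$. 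The main obstacle is the second step: an $L^2$ bound on $\widehat\mu_i - e^a_N(Z_i)$ does not survive a union bound over $\Theta(n)$ observations, so the argument must exploit the conditional independence structure inside $\widehat\mu_i$ and invoke the $(2+\delta)$-moment condition through Rosenthal's inequality, mirroring the treatment of $\max_b|U_b|$ in Lemma \ref{lem:train-var-est}; the remaining steps are routine law-of-large-numbers and Slutsky manipulations.
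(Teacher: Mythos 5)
Your proof is correct, and its overall architecture---decompose $\widehat{C}_N$ into the sample covariance of the oracle block summaries $(e_{b,\mathrm{train}},\widetilde{m}_b)$, which is an i.i.d.\ array with population covariance exactly $C_N$, plus perturbation terms driven by $\widehat{e}_b-e_{b,\mathrm{train}}$ and $\widehat{m}_b-\widetilde{m}_b$---matches the paper's. The difference is in how the test-side perturbation is controlled. You insist on a uniform bound $\max_{i\le n}|\widehat{\mu}_i-e^a_N(Z_i)|\xrightarrow{p}0$, which forces you through a conditional Rosenthal inequality and a union bound over $n$ observations, exploiting the $(2+\delta)$-moment condition; that argument is valid (the disjoint-block design does make the $B-1$ summands in $\widehat{\mu}_i$ conditionally i.i.d.\ given $Z_i$, and Jensen does control the conditional-variance term), and it mirrors the treatment of $\max_b|U_b|$ in Lemma~\ref{lem:train-var-est}. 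The paper instead observes that the assembly only ever needs the \emph{average} $\frac{1}{B}\sum_b(\Delta m_b)^2$, which by Jensen is bounded by $\frac{1}{n}\sum_i(\widehat{\mu}_i-e^a_N(Z_i))^2$ and therefore vanishes directly from the elementary $L^2$ bound $\mathbb{E}[(\widehat{\mu}_i-e^a_N(Z_i))^2]=\mathbb{E}[v_N(Z)]/(B-1)$ already established in Lemma~\ref{lem:test-var-est}; the cross terms are then handled by Cauchy--Schwarz against $\frac{1}{B}\sum_b e_{b,\mathrm{train}}^2=O_p(1)$. So what you describe as ``the main obstacle''---that an $L^2$ bound does not survive a union bound over $\Theta(n)$ observations---is an obstacle only on your chosen route; the paper's average-of-squares route sidesteps it entirely and needs no higher-moment machinery for this component. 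Your approach buys a stronger (uniform) conclusion and a slightly more transparent assembly via $\max_b|r_b|,\max_b|s_b|=o_p(1)$ against $O_p(1)$ first-moment averages, at the cost of invoking Rosenthal a second time; both are sound.
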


\begin{proof}[\textbf{Proof of Lemma \ref{lem:cross-est}}]
Define the errors $\Delta e_b = \widehat{e}_b - e_{b,\mathrm{train}}$ and $\Delta m_b = \widehat{m}_b - \widetilde{m}_b$.
From the proof of Lemma \ref{lem:train-var-est}, we established the uniform bound $\max_b |\Delta e_b| \xrightarrow{p} 0$. Consequently, the mean squared error for the training component vanishes:
\[
\frac{1}{B} \sum_{b=1}^B (\Delta e_b)^2 \le \left(\max_{b} |\Delta e_b|\right)^2 \xrightarrow{p} 0.
\]
For the test component $\widehat{m}_b$, we rely on the $L^2$ convergence from Lemma \ref{lem:test-var-est}. Note that $\widehat{m}_b - \widetilde{m}_b = \frac{1}{N} \sum_{i \in S_b} (\widehat{\mu}_i - e^a_N(Z_i))$. By Jensen's inequality and the fact that $|S_b|=N$ is fixed:
\[
(\Delta m_b)^2 = \left( \frac{1}{N} \sum_{i \in S_b} (\widehat{\mu}_i - e^a_N(Z_i)) \right)^2 \le \frac{1}{N} \sum_{i \in S_b} (\widehat{\mu}_i - e^a_N(Z_i))^2.
\]
Averaging over all $B$ blocks, we obtain
\[
\frac{1}{B} \sum_{b=1}^B (\Delta m_b)^2 \le \frac{1}{BN} \sum_{b=1}^B \sum_{i \in S_b} (\widehat{\mu}_i - e^a_N(Z_i))^2 = \frac{1}{n} \sum_{i=1}^n (\widehat{\mu}_i - e^a_N(Z_i))^2.
\]
which, by Lemma \ref{lem:test-var-est}, converges to 0 in probability. Thus, $\frac{1}{B} \sum_{b=1}^B (\Delta m_b)^2 \xrightarrow{p} 0$.

Now, express the difference in sample covariances (ignoring the $B/(B-1)$ correction which approaches 1) as:
\begin{align*}
\widehat{C}_N - \widetilde{C}_N &\approx \frac{1}{B} \sum_{b=1}^B \left( \widehat{e}_b \widehat{m}_b - e_{b,\mathrm{train}} \widetilde{m}_b \right) - \bar{\widehat{e}}\bar{\widehat{m}} + \bar{e}_{\mathrm{train}}\bar{\widetilde{m}} \\
&= \frac{1}{B} \sum_{b=1}^B \left[ \Delta e_b \widetilde{m}_b + e_{b,\mathrm{train}} \Delta m_b + \Delta e_b \Delta m_b \right] - (\text{mean terms}).
\end{align*}
Using the Cauchy-Schwarz inequality, we bound the cross-terms. For example:
\[
\left| \frac{1}{B} \sum_{b=1}^B e_{b,\mathrm{train}} \Delta m_b \right| \le \sqrt{\frac{1}{B} \sum_{b=1}^B e_{b,\mathrm{train}}^2} \sqrt{\frac{1}{B} \sum_{b=1}^B (\Delta m_b)^2}.
\]
The first factor is bounded (by the LLN applied to $e_{b,\mathrm{train}}$), and the second factor converges to zero as shown above. Similar bounds apply to terms involving $\Delta e_b$ (which converges uniformly) and the product $\Delta e_b \Delta m_b$. The mean terms vanish similarly. Therefore, $|\widehat{C}_N - \widetilde{C}_N| \xrightarrow{p} 0$.

Finally, since $(e_{b,\mathrm{train}}, \widetilde{m}_b)$ are i.i.d. vectors across blocks $b$, the standard LLN implies $\widetilde{C}_N \xrightarrow{p} \mathrm{Cov}(e_{b,\mathrm{train}}, \widetilde{m}_b) = C_N$. Thus, $\widehat{C}_N \xrightarrow{p} C_N$.
\end{proof}

\begin{proof}[\textbf{Proof of Theorem~\ref{thm:CLT_varest}}]
By Lemmas~\ref{lem:train-var-est}, \ref{lem:test-var-est}, and~\ref{lem:cross-est},
\[
N \widehat{V}_{N,\mathrm{train}} \xrightarrow{p} N V_{N,\mathrm{train}}, \qquad \widehat{V}_{N,\mathrm{test}} \xrightarrow{p} V_{N,\mathrm{test}}, \qquad N \widehat{C}_N \xrightarrow{p} N C_N.
\]
From the variance decomposition \eqref{eq:asym-var}, we obtain
\[
\widehat{\sigma}_N^2 = N \widehat{V}_{N,\mathrm{train}} + \widehat{V}_{N,\mathrm{test}} + 2N \widehat{C}_N \xrightarrow{p} N V_{N,\mathrm{train}} + V_{N,\mathrm{test}} + 2N C_N = \sigma_N^2,
\]
establishing part~(1).

For part~(2), Theorem~\ref{thm:CLT-bCV} gives
$\sqrt{n}(\widehat{e}^{\mathrm{CV}} - e^a_N) \xrightarrow{d} \mathcal{N}(0, \sigma_N^2)$. 
Since $\widehat{\sigma}_N^2 \xrightarrow{p} \sigma_N^2 > 0$ by part (1), by the Slutsky's Theorem, we have
\[
\frac{\sqrt{n}(\widehat{e}^{\mathrm{CV}} - e^a_N)}{\widehat{\sigma}_N} \xrightarrow{d} \mathcal{N}(0, 1).
\]
\end{proof}

\section{Block-Out CV under Fixed-$B$ Asymptotics}
\label{sec:regimeB}

The asymptotic results in Section~\ref{subsec:bCV} rely on the regime where the training size $N$ is fixed while the number of blocks $B = n/N \to \infty$ (the fixed-$N$ asymptotic regime). While valid for time-series rolling windows or massive datasets, in many econometric applications, the number of fold splits $B$ is small and fixed (e.g., $B=5$ or $10$), while the sample size within each block $N$ is large.

We define \emph{fixed-$B$ asymptotic regime} as the asymptotic framework where $B$ is fixed and $N \to \infty$ (implying $n \to \infty$). Under this regime, the estimator $\widehat{e}^{\mathrm{CV}}$ aggregates losses from $B$ highly stable predictors. The variance estimator $\widehat{\sigma}^2_{N_k}$ based on the fixed-$N$ asymptotic regime can perform poorly here because it relies on estimating the between-block covariance structure from very few ($B$) observations.

In this section, we derive a Central Limit Theorem (CLT) for the fixed-$B$ asymptotic regime by exploiting the algorithmic stability of the learner as $N \to \infty$. This approach leverages recent results on cross-validation stability \citep{lei2025modern}, specifically the behavior of risk gradients for deterministic centering.

Recall that the block-out cross-validation estimator is given by:
\[
\widehat{e}^{\mathrm{CV}} = \frac{1}{n} \sum_{i=1}^{n} \widetilde{\ell}_i, \quad \text{where} \quad \widetilde{\ell}_i = \frac{1}{B-1} \sum_{b \neq b(i)} \ell\bigl(f_{N}^{(b)}(X_i), Y_i\bigr),
\]
where $b(i)$ denotes the block containing observation $i$. Note that $\widetilde{\ell}_i$ is the average loss of observation $i$ evaluated by the $B-1$ models that did not include $i$ in their training set. Let $e^a_N = \mathbb{E}[\ell(f_{D_N}(X), Y)]$ be the population prediction error for a training set of size $N$.

To establish asymptotic normality centered at $e^a_N$, we require conditions on the stability of the learning algorithm. Following \citet{lei2025modern}, we distinguish between the stability of the loss function and the stability of the risk (expected loss). Define the conditional risk of a model trained on dataset $D_N$ as $R(D_N) = \mathbb{E}_{Z}[\ell(f_{D_N}(X), Y)]$.

\begin{assumption}[Algorithm Stability]
\label{assu:RiskStable}
The algorithm satisfies:
\begin{equation}\label{eq:Stability}
 \|\nabla_1 R\|_{L_2}:=\sqrt{\mathbb{E}\left[\left(\nabla_1 R(D_N)\right)^2\right]} =o(N^{-1})  \text{ as } N \to \infty,   
\end{equation}
where $\nabla_1 R(D_N) = R(D_N) - R(D_N^{(1)})$ denotes the Perturb-One Stability, i.e., the change in conditional risk when one training point is replaced by an independent copy.\footnote{See Definition 2.3 in \cite{lei2025modern} for details.} 
\end{assumption}

\paragraph{Discussion on Stability.} Assumption~\ref{assu:RiskStable} corresponds to the condition in Proposition 4.11.(1) of \citet{lei2025modern}, where the variability induced by training data vanishes relative to the evaluation noise. \citet{lei2025modern} also provides arguments\footnote{See the arguments after Theorem 4.7 (pp.54-55) in \citet{lei2025modern}.} that in machine learning contexts, the risk function usually enjoys much better stability than the loss function, making risk stability requirement in our Assumption  \ref{assu:RiskStable} plausible.

\begin{theorem}[CLT under Fixed-$B$ Asymptotics]
\label{thm:CLT-regimeB}
Under Assumption~\ref{assu:RiskStable}, as $N \to \infty$ with $B$ fixed:
\[
\sqrt{n}\bigl(\widehat{e}^{\mathrm{CV}} - e^a_N\bigr) \xrightarrow{d} \mathcal{N}(0, \tau^2).
\]
\end{theorem}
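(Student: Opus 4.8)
The plan is to split $\widehat e^{\mathrm{CV}}-e^a_N$ into a training-fluctuation term and an evaluation-noise term, show that the former is $o_p(n^{-1/2})$ using Assumption~\ref{assu:RiskStable}, and extract the limiting law from the latter (note that $N=N_k=n/B$ moves with $n$, so this is a triangular-array statement centered at the moving target $e^a_N$). For each block $b$, let $R_b\equiv\mathbb E_Z[\ell(f^{(b)}_N(X),Y)\mid\mathcal{D}^{(b)}_{\mathrm{train}}]$ denote the conditional risk of the block-$b$ predictor and write
\[
\widehat e^{\mathrm{CV}}-e^a_N\;=\;\underbrace{\frac1B\sum_{b=1}^B\bigl(R_b-e^a_N\bigr)}_{\Delta_{\mathrm{tr}}}\;+\;\underbrace{\frac1B\sum_{b=1}^B\frac1{M_n}\sum_{i\notin S_b}\bigl(\ell(f^{(b)}_N(X_i),Y_i)-R_b\bigr)}_{\Delta_{\mathrm{ev}}}.
\]
Since the training blocks are disjoint, the $R_b$ are i.i.d.\ with mean $e^a_N$, and the Efron--Stein inequality together with the risk-gradient bound \eqref{eq:Stability} gives $\mathrm{Var}(R_b)=o(N^{-1})$. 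Because $n=BN$ with $B$ fixed, $\mathrm{Var}\bigl(\sqrt n\,\Delta_{\mathrm{tr}}\bigr)=\tfrac nB\,\mathrm{Var}(R_1)=N\cdot o(N^{-1})=o(1)$, so $\sqrt n\,\Delta_{\mathrm{tr}}\xrightarrow{p}0$ by Chebyshev. This is the one essential use of Assumption~\ref{assu:RiskStable}.

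It then remains to show $\sqrt n\,\Delta_{\mathrm{ev}}\xrightarrow{d}\mathcal N(0,\tau^2)$. I would rewrite $\Delta_{\mathrm{ev}}$ as the per-observation average $\tfrac1n\sum_{i=1}^n\widetilde\xi_i$, where $\widetilde\xi_i\equiv\tfrac1{B-1}\sum_{b\neq b(i)}\bigl(\ell(f^{(b)}_N(X_i),Y_i)-R_b\bigr)$, and pass to its first-order H\'ajek projection onto the individual test points. Because $\mathcal{D}^{(b)}_{\mathrm{train}}$ is independent of $Z_i$ whenever $i\notin S_b$, one has $\mathbb E[\widetilde\xi_i\mid Z_i]=e^a_N(Z_i)-e^a_N$, so the projection is the i.i.d.\ sum $\tfrac1{\sqrt n}\sum_{i=1}^n\bigl(e^a_N(Z_i)-e^a_N\bigr)$, to which the Lindeberg/Lyapunov CLT applies using the uniform $(2+\delta)$ moment supplied by Assumption~\ref{assu:moments} (note $\mathbb E[|e^a_N(Z)|^{2+\delta}]\le\mathbb E[|\ell(f_N(X),Y)|^{2+\delta}]<\infty$ by Jensen). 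The residual $\Delta_{\mathrm{ev}}-\tfrac1n\sum_i(e^a_N(Z_i)-e^a_N)$ is built from the doubly-centered kernels $h_b(Z_i)\equiv\bigl(\ell(f^{(b)}_N(X_i),Y_i)-R_b\bigr)-\bigl(e^a_N(Z_i)-e^a_N\bigr)$, which satisfy $\mathbb E[h_b(Z_i)\mid Z_i]=0$ and $\mathbb E[h_b(Z_i)\mid\mathcal{D}^{(b)}_{\mathrm{train}}]=0$. A second-moment expansion, organized according to which index pairs $(b,i)$ and $(b',j)$ share observations (through a common test point, or because one observation lies in the other's training block), shows that the double centering annihilates every cross-covariance except the ``mutual-overlap'' configuration $i\in S_{b'},\,j\in S_b$, whose contribution is governed by the influence of a single forced training point on the loss at a fixed test point and is controlled via \eqref{eq:Stability} along the lines of \citet{lei2025modern}. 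Taking $\tau^2$ to be the resulting limit of $\mathrm{Var}\bigl(\sqrt n\,\Delta_{\mathrm{ev}}\bigr)$ (assumed positive) and combining with $\sqrt n\,\Delta_{\mathrm{tr}}\xrightarrow{p}0$ through Slutsky delivers the theorem.

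The crux is exactly this residual analysis and the attendant identification of $\tau^2$. In the fixed-$N$ regime of Theorem~\ref{thm:CLT-bCV} the analogous remainder is a genuine higher-order $V$-statistic term of order $n^{-1}$; here, because $N\to\infty$, the evaluation noise itself has nontrivial structure, since each observation is scored by $B-1$ distinct predictors that coincide only in the limit, and $\tau^2$ reflects both the conditional variance of the loss given the training set and the cross-predictor correlations. Crucially, Assumption~\ref{assu:RiskStable} controls only the \emph{average} loss perturbation, not pointwise ones, so establishing that the residual is asymptotically negligible relative to $\tau^2$ (rather than merely $O_p(1)$) requires pairing \eqref{eq:Stability} with the moment condition with some care---precisely the role of \citet{lei2025modern}'s Proposition~4.11(1) on risk gradients with deterministic centering. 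A minor additional point is that a generic algorithm $a$ need not be permutation-symmetric in its inputs; I would read \eqref{eq:Stability} (following Definition~2.3 of \citealp{lei2025modern}) uniformly over the perturbed coordinate so that Efron--Stein still yields $\mathrm{Var}(R_b)=o(N^{-1})$.
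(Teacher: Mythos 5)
Your proposal is correct and follows essentially the same route as the paper's proof: a H\'ajek-projection argument in which Assumption~\ref{assu:RiskStable} plus Efron--Stein kills the training-randomness contribution at the $\sqrt{n}$ scale, the projection onto individual test points yields the i.i.d.\ sum of $e^a_N(Z_i)-e^a_N$ governing $\tau^2$, and the degenerate remainder is shown negligible via the second-moment/stability analysis of \citet{lei2025modern}. The only difference is organizational---you split off the training term $\tfrac1B\sum_b(R_b-e^a_N)$ at the level of the statistic rather than inside the projection---and your sketch of the residual is, if anything, slightly more explicit than the paper's appeal to standard U-statistic results.
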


\begin{proof}[\textbf{Proof of Theorem \ref{thm:CLT-regimeB}}] 
We adapt Proposition 4.11.(1) in \citet{lei2025modern}, along with the preceding analysis and lemmas \citet{lei2025modern}, to our block-out CV design.  We employ the H\'{a}jek projection technique for deterministic centering as formalized in \citet{lei2025modern}. Let $Z_i = (X_i, Y_i)$ and define the statistic $L_n = \sum_{i=1}^n \widetilde{\ell}_i$. The first-order H\'{a}jek projection is:
\[
\widehat{L}_n = \sum_{j=1}^n \mathbb{E}[L_n \mid Z_j] - (n-1)\mathbb{E}[L_n].
\]
We decompose the conditional expectation $\mathbb{E}[L_n \mid Z_j]$ into two components:
\begin{enumerate}
    \item Test component $j=i$: $Z_j$ acts as the test point for $\widetilde{\ell}_j$. Since $\widetilde{\ell}_j$ uses models trained on blocks not containing $j$, $Z_j$ is independent of the training sets. Thus:
    \[
    \mathbb{E}[\widetilde{\ell}_j \mid Z_j] = l_N(Z_j) \equiv \mathbb{E}[\ell(f_{D_N}(X), Y) \mid Z=Z_j].
    \]
    \item Training component $j \neq i$: $Z_j$ acts as a training point for observation $i$, which implies $j$ is in a training block used for $\widetilde{\ell}_i$). Hence, it affects the model prediction. In our block design, $Z_j$ is in the training set for all $B-1$ blocks that do not contain $j$. The sum of these influences corresponds to the function $g_N(Z_j)$ defined in \citet[Eq. 40]{lei2025modern}.
\end{enumerate}
The variance of the projection is $\mathrm{Var}(\widehat{L}_n) \approx n \mathrm{Var}(l_N(Z_1) + g_N(Z_1))$, where $g_N$ captures the sum of training influences. By the Efron-Stein inequality, the variance of the training influence scales with the stability of the risk:
\[
\mathrm{Var}(g_N(Z_1)) \le C \left( N \|\nabla_1 R(D_N)\|_{L_2} \right)^2.
\]
By Assumption~\ref{assu:RiskStable} (Risk Stability), we have $N \|\nabla_1 R\|_{L_2} \to 0$. Consequently, the total asymptotic variance is dominated by the test term:
\[
\frac{1}{n} \mathrm{Var}(\widehat{L}_n) \to \mathrm{Var}(l_N(Z_1)) = \tau^2.
\]
Standard results on U-statistics imply the remainder $L_n - \widehat{L}_n$ is $o_p(\sqrt{n})$. Applying the standard CLT to the sum of i.i.d. variables $l_N(Z_i)$ yields the result.
\end{proof}

In the fixed-$B$ asymptotic regime, since the training variance component vanishes, the asymptotic variance $\tau^2$ is simply the variance of the prediction error on a fresh test point. This can be estimated by the sample variance of the aggregated losses:
\begin{equation}
\label{eq:tau-hat-B}
\widehat{\tau}^2_B = \frac{1}{n-1} \sum_{i=1}^n \left( \widetilde{\ell}_i - \widehat{e}^{\mathrm{CV}} \right)^2.
\end{equation}

\begin{corollary}[Studentized CLT]\label{corol:CLT-regimeB}
Under the conditions of Theorem~\ref{thm:CLT-regimeB}, $\widehat{\tau}^2_B \xrightarrow{p} \tau^2$, and
\[
\frac{\sqrt{n}(\widehat{e}^{\mathrm{CV}} - e^a_N)}{\widehat{\tau}_B} \xrightarrow{d} \mathcal{N}(0, 1).
\]
\end{corollary}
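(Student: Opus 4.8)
The plan is to prove part~1 (consistency of $\widehat\tau^2_B$) and then deduce part~2 by Slutsky's theorem from Theorem~\ref{thm:CLT-regimeB}. Recall that $\widehat e^{\mathrm{CV}}=\frac1n\sum_{i=1}^n\widetilde\ell_i$, so that $\widehat\tau^2_B=\frac{n}{n-1}\bigl(\frac1n\sum_i\widetilde\ell_i^2-(\widehat e^{\mathrm{CV}})^2\bigr)$. It therefore suffices to show $\frac1n\sum_i\widetilde\ell_i^2\xrightarrow{p}\mathbb E[l_N(Z)^2]$ and $(\widehat e^{\mathrm{CV}})^2\xrightarrow{p}(e^a_N)^2$, where $l_N(z)\equiv\mathbb E_{D_N}[\ell(f_{D_N}(x),y)]$ is the test-point risk function appearing in the proof of Theorem~\ref{thm:CLT-regimeB}, which satisfies $\mathbb E[l_N(Z)]=e^a_N$ and $\mathrm{Var}(l_N(Z))=\tau^2$. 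The second convergence is immediate: Theorem~\ref{thm:CLT-regimeB} yields $\widehat e^{\mathrm{CV}}-e^a_N=O_p(n^{-1/2})=o_p(1)$, and $e^a_N$ is bounded under Assumption~\ref{assu:moments}.

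For the first convergence I would use the decomposition $\widetilde\ell_i=l_N(Z_i)+\xi_i$ with $\xi_i\equiv\widetilde\ell_i-l_N(Z_i)$, noting $\mathbb E[\xi_i\mid Z_i]=0$ because $Z_i$ is independent of each of the $B-1$ training blocks that enter $\widetilde\ell_i$. Expanding the square produces three pieces. First, $\frac1n\sum_i l_N(Z_i)^2$ is an average of i.i.d.\ terms with mean $\mathbb E[l_N(Z)^2]=\tau^2+(e^a_N)^2$; since Jensen's inequality and Assumption~\ref{assu:moments} give $\sup_N\mathbb E[|l_N(Z)|^{2+\delta}]<\infty$, a weak law for row-wise i.i.d.\ triangular arrays gives $\frac1n\sum_i l_N(Z_i)^2\xrightarrow{p}\mathbb E[l_N(Z)^2]$. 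Second, using that conditional on $Z_i$ the $B-1$ loss evaluations forming $\widetilde\ell_i$ are i.i.d.\ across blocks, $\mathbb E[\xi_i^2]=\frac1{B-1}\,\mathbb E_Z\!\bigl[\mathrm{Var}_{D_N}(\ell(f_{D_N}(X),Y)\mid Z)\bigr]$, which tends to $0$ (see below), so Markov's inequality yields $\frac1n\sum_i\xi_i^2\xrightarrow{p}0$. Third, the cross term $\frac1n\sum_i l_N(Z_i)\xi_i$ is bounded in absolute value, via Cauchy--Schwarz, by the product of the square roots of the first two averages, hence $\xrightarrow{p}0$. Combining the three pieces gives $\widehat\tau^2_B\xrightarrow{p}\mathbb E[l_N(Z)^2]-(e^a_N)^2=\tau^2$, and then Slutsky's theorem with $\tau^2>0$ gives part~2.

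The main obstacle is the claim that $\mathbb E_Z[\mathrm{Var}_{D_N}(\ell(f_{D_N}(X),Y)\mid Z)]\to0$, i.e.\ that the per-test-point loss noise is asymptotically negligible. This is exactly the ingredient used in the proof of Theorem~\ref{thm:CLT-regimeB} to show the H\'{a}jek remainder $L_n-\widehat L_n$ is $o_p(\sqrt n)$: an Efron--Stein bound gives $\mathrm{Var}_{D_N}(\ell(f_{D_N}(x),y))\le\frac N2\,\mathbb E_{D_N}\bigl[(\ell(f_{D_N}(x),y)-\ell(f_{D_N^{(1)}}(x),y))^2\bigr]$, and the perturb-one loss stability invoked there, together with the $(2+\delta)$-moment bound of Assumption~\ref{assu:moments} (to upgrade pointwise/in-probability stability of the loss to $L^1$-convergence after integrating over $Z$), delivers the claim. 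The only new observation relative to Theorem~\ref{thm:CLT-regimeB} is that this same bound controls the \emph{second sample moment} $\frac1n\sum_i\xi_i^2$, not merely the normalized sum $\frac1{\sqrt n}\sum_i\xi_i$; the rest of the argument is routine bookkeeping.
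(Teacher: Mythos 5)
The paper states this corollary without any proof, so there is nothing to compare your argument against line by line; your route --- decompose $\widetilde\ell_i = l_N(Z_i)+\xi_i$, a triangular-array weak law for $\tfrac1n\sum_i l_N(Z_i)^2$, Markov's inequality for $\tfrac1n\sum_i\xi_i^2$, Cauchy--Schwarz for the cross term, and Slutsky --- is the natural one and is structurally sound. Your computation $\mathbb E[\xi_i^2]=\tfrac{1}{B-1}\mathbb E_Z[v_N(Z)]$ with $v_N(z)=\mathrm{Var}_{D_N}(\ell(f_{D_N}(x),y))$ is correct (the $B-1$ evaluations of $Z_i$ use disjoint, hence independent, training blocks), and you are right that this is the crux: because $B$ is \emph{fixed}, the factor $\tfrac1{B-1}$ does not help, and everything hinges on $\mathbb E_Z[v_N(Z)]\to 0$.

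The one genuine weak point is your justification of that crux. You attribute it to ``the perturb-one loss stability invoked'' in the proof of Theorem~\ref{thm:CLT-regimeB}, but no such assumption appears there: Assumption~\ref{assu:RiskStable} imposes perturb-one stability only of the conditional \emph{risk} $R(D_N)=\mathbb E_Z[\ell(f_{D_N}(X),Y)]$, and the paper explicitly distinguishes risk stability from loss stability (noting the former is typically much weaker). Efron--Stein applied to the risk controls $\mathrm{Var}(R(D_N))$, i.e.\ the $g_N$ term, but says nothing about the pointwise loss variance $v_N(z)$; your Efron--Stein bound $v_N(z)\le \tfrac N2\,\mathbb E[(\nabla_1\ell)^2]$ requires $\|\nabla_1\ell\|_{L_2}=o(N^{-1/2})$, which is an additional loss-stability condition not implied by \eqref{eq:Stability}. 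So as written, your proof of $\tfrac1n\sum_i\xi_i^2\xrightarrow{p}0$ rests on an ingredient that the stated hypotheses do not supply. You are, however, correct on the deeper point: since $\mathrm{Var}(\widetilde\ell_1)=\mathrm{Var}(l_N(Z_1))+\tfrac{1}{B-1}\mathbb E[v_N(Z)]$, the theorem's own conclusion that the asymptotic variance equals $\tau^2=\lim_N\mathrm{Var}(l_N(Z_1))$ --- equivalently, that the H\'ajek remainder is $o_p(\sqrt n)$ --- already forces $\mathbb E[v_N(Z)]\to 0$. Read as ``under whatever conditions make Theorem~\ref{thm:CLT-regimeB} true,'' your argument closes; read literally against Assumption~\ref{assu:RiskStable} alone, both your step and the theorem's remainder bound share the same gap, and an explicit loss-stability hypothesis should be added.
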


\section{Details of ML Implementation}\label{sec:ML-detail}

\paragraph{Outcomes and feature construction.}
We apply the following pre-processing to all ML methods we use. For earnings as the target outcome, we winsorize it prior to analysis at the 1st and 99th percentiles to reduce
sensitivity to extreme outliers.\footnote{We alternatively train the ML models on $\log(1+Y)$ while calculating the error on the original level scale.} The continuous predictors are standardized, and the categorical predictors are handled as follows:
(i) rare categories are collapsed into a single ``other'' level when their
frequency falls below a minimum-count threshold (e.g., 50), and
(ii) the resulting categorical variables are converted to indicators
(with one category dropped as a reference group). In addition, we augment $X$ with other
outcomes as predictors (continuous outcomes enter as standardized numeric
features; categorical outcomes enter via indicators), excluding the
current target and respecting the earnings/wage exclusion noted above.

\paragraph{Learning-curve design and block-out CV.}
Fix a training size $N$. After randomly permuting the analysis sample once for
reproducibility, we partition the data into $B=\lfloor n/N \rfloor$ disjoint
blocks of size $N$. For each block $b=1,\ldots,B$, we fit the ML model using
only the observations in block $b$ as the training set, generate predictions
for all observations, and evaluate loss on the complement of block $b$. Averaging the out-of-block losses
across $b$ yields the block-out CV estimate of predictive risk at training
size $N$. Repeating this procedure over a grid of $N$ values produces the
error curve.

\paragraph{Model classes and per-$N$ hyperparameter tuning.}
For each $N$, we tune hyperparameters on a tuning subset of size
$N$ drawn at random from the analysis sample, and then apply the
selected hyperparameters in the block-out CV loop.\footnote{This approach is useful in stabilizing the error curve across $N$, although to faithfully follow the definition of $e^a_N$, we may want to tune hyperparameters for each training block in the block-out CV loop. Still, the current approach may be acceptable as the tuning is low-dimensional and held fixed across CV folds for a given N.} For regression (continuous $Y$), we consider lasso and random forest. The lasso penalty parameter is selected by \texttt{LassoCV} using 5-fold CV. The random forest uses 300 trees and is tuned by 3-fold CV over a small grid of structural parameters (maximum depth in $\{\texttt{None},10,20\}$ and minimum leaf size in $\{1,5\}$), with performance scored in terms of RMSE. For classification (discrete $Y$), we consider $\ell_1$-penalized logistic regression (logit lasso) and random forest classification. Logistic regression is fit using the \texttt{saga} solver and tuned over a grid of inverse-regularization parameters $C\in\{0.01,0.1,1,10\}$ using stratified CV; random forest classification uses 300 trees and is tuned over the same depth/leaf grid as in regression using stratified CV.

For neural networks used in the preliminary analysis in Section \ref{sec:PSID_more}, we fit a shallow fully-connected feedforward network with two ReLU hidden layers (64 and 32 units) and an output layer that is linear for regression and softmax for classification (with the number of outputs set to the number of classes observed in the training data). Networks are trained with the Adam optimizer using MSE for regression (100 epochs) and categorical cross-entropy for classification (10 epochs, with one-hot encoded labels).

\paragraph{Small-$N$ safeguards for classification.}
For very small $N$, stratified CV and/or block-level training sets can become
degenerate due to severe class imbalance. We therefore implement
explicit safeguards: (i) the number of folds in stratified CV is chosen
adaptively and never exceeds the minimum class count (capped at 3 folds), and
(ii) if a block-level training set contains only one class, we default to a
majority-class prediction rule for that block.

\paragraph{Uncertainty quantification.}
We report uncertainty measures derived from the variance estimator introduced in Section \ref{sec:InferenceTheory}: the
cross-block variability of observation-level out-of-block losses yields an
estimated variance component $\widehat{\sigma}_k^2$, which is converted into
standard errors of the CV risk via $\sqrt{\widehat{\sigma}_k^2/n}$. For $\widehat{\sigma}_k^2$, as mentioned in the main text, we use a hybrid approach based on the fixed-$N$ asymptotic theory (Theorem \ref{thm:CLT-bCV}) when $N\le400$ and the fixed-$B$ asymptotic theory (Corollary \ref{corol:CLT-regimeB} in Section \ref{sec:regimeB}) otherwise. For RMSE, we
apply a delta-method transformation from MSE to RMSE to obtain confidence
intervals.

\section{Additional Results for the Applications}\label{sec:PSID_more}

\subsection{Other Prediction Methods} \label{app:800}
Here we compare LLM performance to ML performance at a fixed training size for the ML models: 800 observations. In addition to ChatGPT-4, we report performance for  Meta's Llama~3 models (Llama3-8B cutoff: March 2023; Llama3-70B cutoff:
December 2023). We find that predictive performance is broadly similar across LLMs and, separately, across machine-learning algorithms, while varying systematically across outcomes. This pattern suggests that our main findings are not driven by the choice of a particular LLM or comparator algorithm.

\begin{figure}
\noindent \begin{centering}
\includegraphics[scale=0.3]{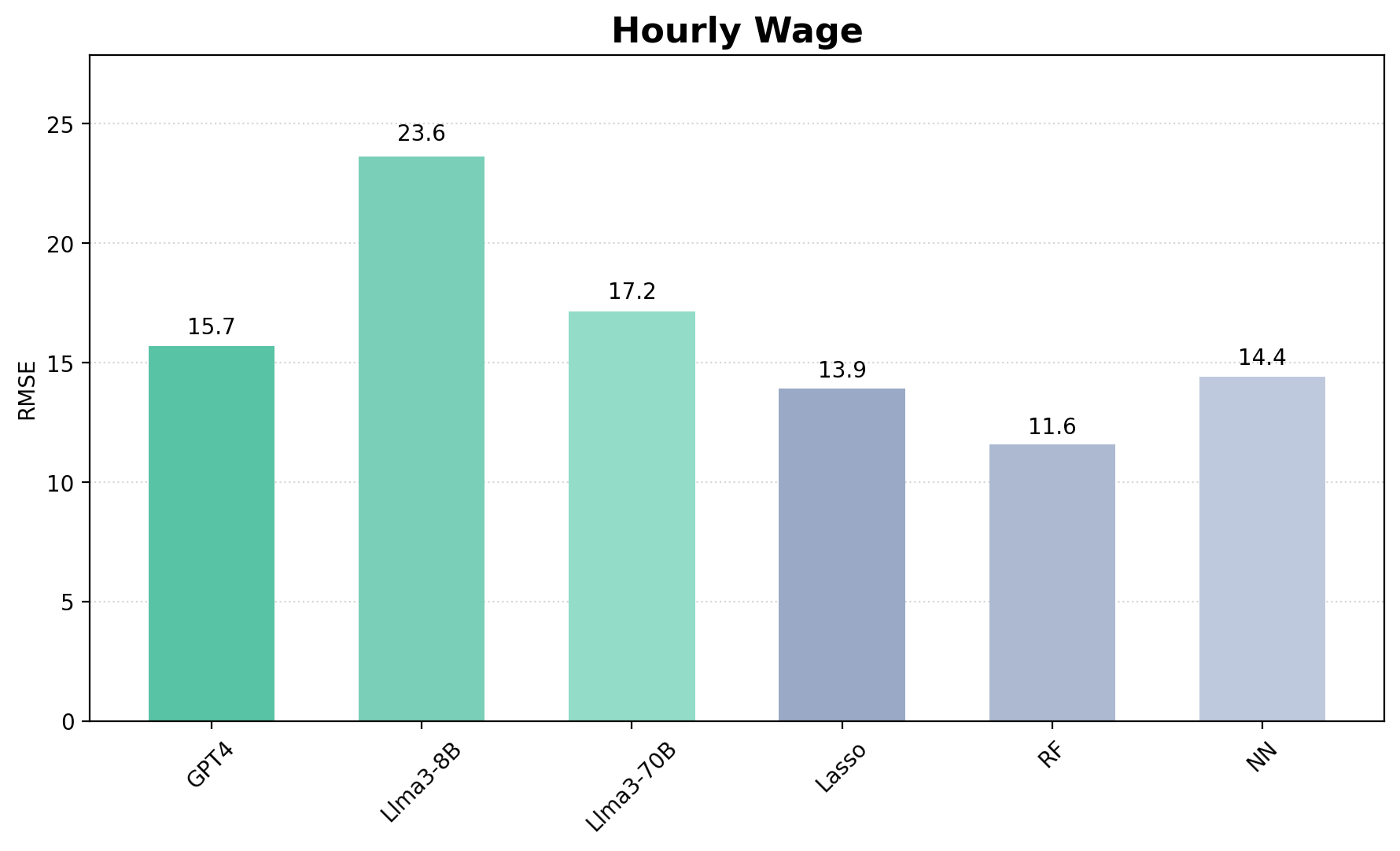}
\par\end{centering}
\caption{Root Mean Squared Error (lower is better) ($N=800$)}
\label{fig:regression}
\end{figure}

\begin{figure}
\noindent \begin{centering}
\includegraphics[scale=0.35]{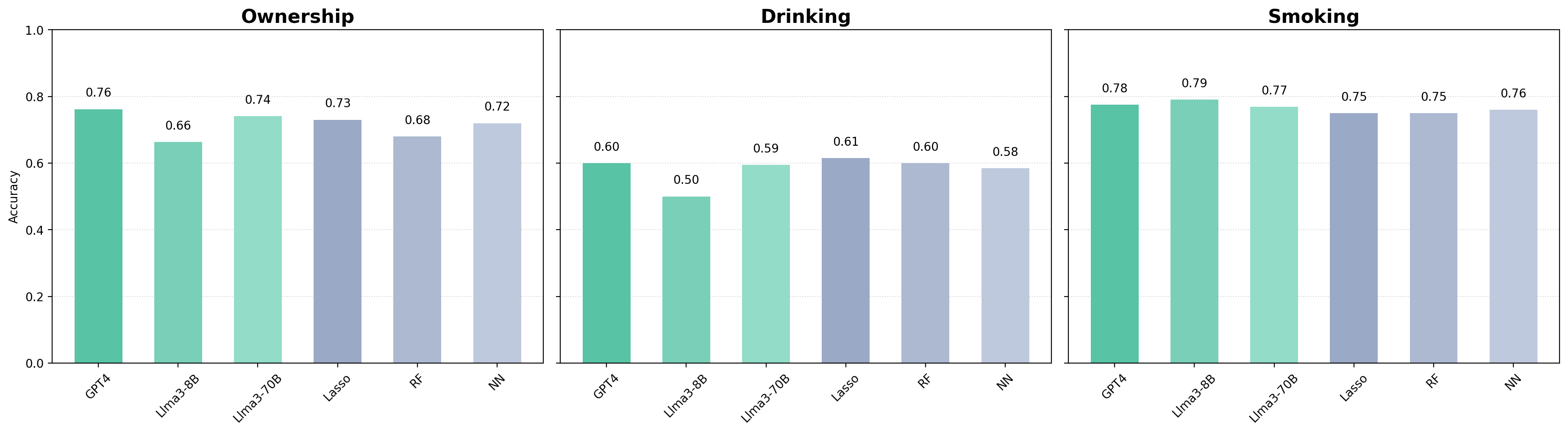}
\par\end{centering}
\caption{Accuracy (higher is better) ($N=800$)}
\label{fig:classification}
\end{figure}

\subsection{Other Error Metrics}

In the main text, prediction error for discrete outcomes is measured using the 0-1 loss function. Misclassification rate can be mechanically low for outcomes that are very skewed, which might mask differences between the LLMs and ML algorithms that would otherwise emerge. As a robustness check, we thus compare performance in terms of \emph{balanced accuracy} and \emph{F1}, defined as follows.

For discrete (binary) outcomes, let $Y \in \{0,1\}$ denote the true label and $f(X) \in \{0,1\}$ the predicted label. Writing $\mathrm{TP}$, $\mathrm{TN}$, $\mathrm{FP}$, and $\mathrm{FN}$ for the usual entries of the confusion matrix, the \emph{balanced accuracy} is defined as the average of the class-wise recall,
\[
\mathrm{BA}(f)
= \tfrac{1}{2}\!\left(
\frac{\mathrm{TP}}{\mathrm{TP}+\mathrm{FN}}
+
\frac{\mathrm{TN}}{\mathrm{TN}+\mathrm{FP}}
\right).
\]
The \emph{F1 score} is defined as the harmonic mean of precision and recall,
\[
\mathrm{F1}(f)
= \frac{2\,\mathrm{TP}}{2\,\mathrm{TP}+\mathrm{FP}+\mathrm{FN}}
= \frac{2\,\mathrm{Precision}\times \mathrm{Recall}}{\mathrm{Precision}+\mathrm{Recall}},
\]
and emphasizes performance on the positive class by jointly penalizing false positives and false negatives. Figures \ref{fig:classification_balanced} and \ref{fig:classification_f1} replicate our classification results in Appendix \ref{app:800} using these error metrics. We find that the performances across domains are qualitatively similar to those in Figure \ref{fig:classification}.

\begin{figure}
\noindent \begin{centering}
\includegraphics[scale=0.35]{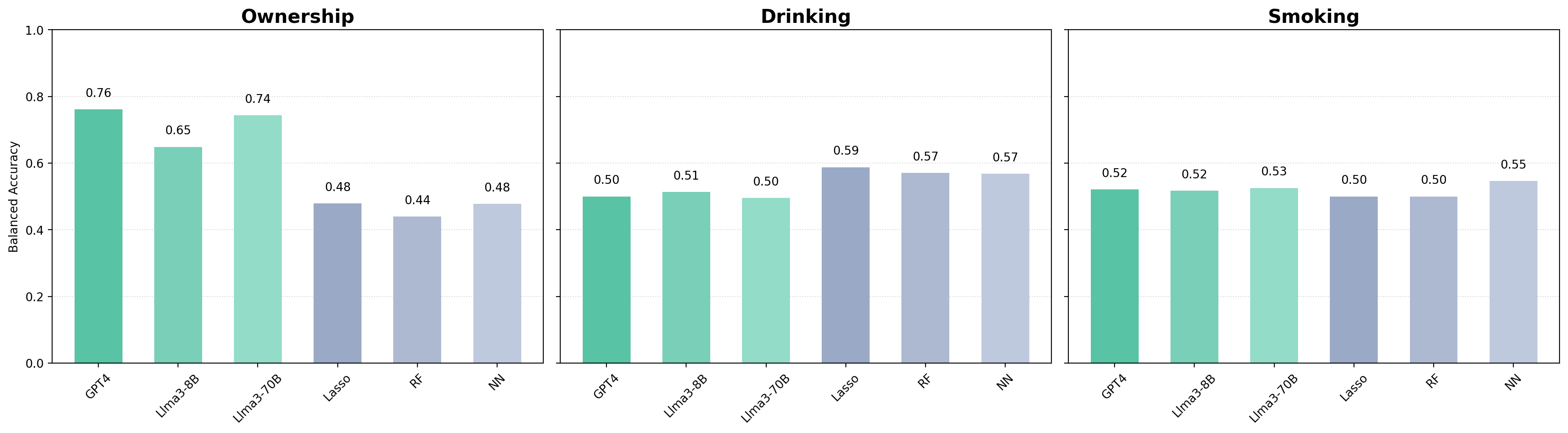}
\par\end{centering}
\caption{Balanced Accuracy (higher is better) ($N=800$)}
\label{fig:classification_balanced}
\end{figure}

\begin{figure}
\noindent \begin{centering}
\includegraphics[scale=0.35]{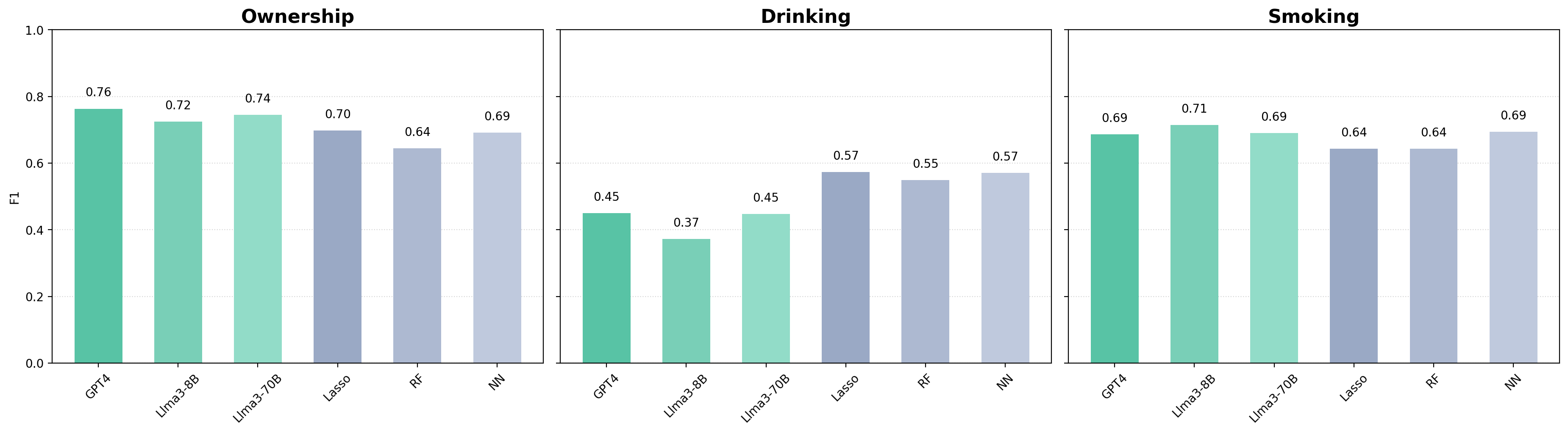}
\par\end{centering}
\caption{F1 (higher is better) ($N=800$)}
\label{fig:classification_f1}
\end{figure}

\end{appendix}

\end{document}